\newcommand{\bi}{\begin{itemize}}
\newcommand{\ei}{\end{itemize}}
\newenvironment{sketch}
{{\em Proof sketch:}}{\par\vspace{2mm}}
\newcommand\snip[1]{}
\renewcommand{\leq}{\leqslant} 
\renewcommand{\geq}{\geqslant}
\newcommand{\R}{\mathcal{R}}
\newcommand{\br}{2.5D-BR}
\newcommand{\gbr}{2.5D-GBR}
\newcommand{\nc}{NC}
\def\cramped                           
\title{Visibility Representations of Boxes in 2.5 Dimensions\thanks{%
Research  started at the 2016 Bertinoro
workshop on Graph Drawing. Research  supported in part 
by NSERC, and by MIUR project AMANDA prot. 2012C4E3KT\_001.}}
\author{
Alessio Arleo\inst{1} \and
Carla Binucci\inst{1} \and
Emilio Di Giacomo\inst{1} \and
William S. Evans\inst{2} \and
Luca Grilli\inst{1} \and
Giuseppe Liotta\inst{1} \and
Henk Meijer\inst{3}\and
Fabrizio Montecchiani\inst{1} \and \\
Sue Whitesides\inst{4} \and
Stephen Wismath\inst{5}
}
\institute{
Universit\`a degli Studi di Perugia, Italy, \email{alessio.arleo@studenti.unipg.it} \email{\{carla.binucci, emilio.digiacomo, luca.grilli, giuseppe.liotta, fabrizio.montecchiani\}@unipg.it} \and
University of British Columbia, Canada, \email{will@cs.ubc.ca}\and
University College Roosevelt, the Netherlands, \email{h.meijer@ucr.nl}\and 
University of Victoria, Canada, \email{sue@uvic.ca}\and
University of Lethbridge, Canada, \email{wismath@uleth.ca}   
}
\begin{document}

\pagenumbering{arabic}
\pagestyle{plain}

\maketitle

\begin{abstract}
We initiate the study of 2.5D box visibility representations (\br{}) where vertices are mapped to 3D boxes having the bottom face in the plane $z=0$ and edges are unobstructed lines of sight parallel to the $x$- or $y$-axis. We prove that: $(i)$ Every complete bipartite graph admits a \br{}; $(ii)$ The complete graph $K_n$ admits a \br{} if and only if $n \leq 19$; $(iii)$ Every graph with pathwidth at most $7$ admits a \br{}, which can be computed in linear time. We then turn our attention to 2.5D grid box representations (\gbr{}) which are \br{}s such that the bottom face of every box is a unit square at integer coordinates. We show that an $n$-vertex graph that admits a \gbr{} has at most $4n - 6 \sqrt{n}$ edges and this bound is tight. Finally, we prove that deciding whether a given graph $G$ admits a \gbr{} with a given footprint is NP-complete. The footprint of a \br{} $\Gamma$ is the set of bottom faces of the boxes in $\Gamma$. 
\end{abstract}

\section{Introduction}

A \emph{visibility representation (VR)} of a graph $G$ maps the vertices of $G$ to non-overlapping geometric objects  and the edges of $G$ to \emph{visibilities}, i.e., segments that do not intersect any geometric object other than at their end-points. Depending on the type of geometric objects representing the vertices and on the rules used for the visibilities, different types of representations have been studied in computational geometry and graph drawing.

    
A \emph{bar visibility representation (BVR)} maps the vertices to horizontal segments, called \emph{bars}, while visibilities are vertical segments. BVRs were introduced in the 80s as a modeling tool for VLSI problems~\cite{Duchet1983319,ov-grild-78,DBLP:journals/dcg/RosenstiehlT86,tt-rgc-91,t-prg-84,DBLP:conf/compgeom/Wismath85}. The graphs that admit a BVR are planar and they have been characterized under various models~\cite{Duchet1983319,DBLP:journals/dcg/RosenstiehlT86,tt-rgc-91,DBLP:conf/compgeom/Wismath85}. 

Extensions and generalizations of BVRs have been proposed in order to enlarge the family of representable graphs. In a \emph{rectangle visibility representation (RVR)} the vertices are axis-aligned rectangles, while visibilities are both horizontal or vertical segments~\cite{SoCG,DBLP:conf/gd/BoseDHS96,DBLP:journals/combinatorics/DeanEHP08,DBLP:journals/dam/DeanH97,JGAA-11,Hutchinson1999161,Shermer96,DBLP:conf/stacs/StreinuW03}. RVRs can exist only for graphs with thickness at most two and with at most $6n-20$ edges~\cite{Hutchinson1999161}. Recognizing these graphs is NP-hard in general~\cite{Shermer96} and can be done in polynomial time in some restricted cases~\cite{SoCG,DBLP:conf/stacs/StreinuW03}. Generalizations of RVRs where orthogonal shapes other than rectangles are used to represent the vertices have been recently proposed~\cite{ddelmmw-opvreg-16,DBLP:journals/ipl/LiottaM16}. 
Another generalization of BVRs are \emph{bar $k$-visibility representations ($k$-BVRs)}, where  each visibility segment can ``see'' through at most $k$ bars. Dean et al.~\cite{JGAA-136} proved that the graphs admitting a $1$-BVR have at most $6n-20$ edges. Felsner and Massow~\cite{JGAA-157} showed that there exist graphs with a $1$-BVR whose thickness is three. The relationship between $1$-BVRs  and  $1$-planar graphs has also been investigated~\cite{6777827,JGAA-330,JGAA-343,Sultana2014}.

RVRs are extended to 3D space by  \emph{Z-parallel Visibility Representations (ZPR)}, where vertices are axis-aligned rectangles belonging to planes parallel to the $xy$-plane, while visibilities are parallel to the $z$-axis. Bose et al.~\cite{JGAA-6} proved that $K_{22}$ admits a ZPR, while $K_{56}$ does not. {\v{S}}tola \cite{Stola2009} subsequently reduced the upper bound on the size of the largest representable complete graph by showing that $K_{51}$ does not admits a ZPR. Fekete et al.~\cite{Fekete1995} showed that $K_7$ is the largest complete graph that admits a ZPR if unit squares are used to represent the vertices.    
A different extension of RVRs to 3D space are the \emph{box visibility representations (BR)} where vertices are 3D boxes, while visibilities are parallel to the $x$-, $y$- and $z$- axis. This model was studied by Fekete and Meijer~\cite{DBLP:journals/ijcga/FeketeM99} who proved that $K_{56}$ admits a BR, while  $K_{184}$ does not.

In this paper we introduce \emph{2.5D box visibility representations (\br{})} where vertices are 3D boxes whose bottom faces lie in the plane $z=0$ and visibilities are parallel to the $x$- and $y$-axis. Like the other 3D models that use the third dimension, \br{}s overcome some limitations of the 2D models. For example, graphs with arbitrary thickness can be realized.
In addition \br{}s seem to be simpler than other 3D models from a visual complexity point of view and have the advantage that they can be physically realized, for example by 3D printers or by using physical boxes. Furthermore, this type of representation can be used to model visibility between buildings of a urban area~\cite{Carmi2015251}.
The main results of this paper are as follows.  
  
\begin{itemize}
\item We show that every complete bipartite graph admits a \br{} (Section~\ref{se:complete}). This implies that there exist graphs that admit a \br{} and have arbitrary thickness.

\item We prove that the complete graph $K_n$ admits a \br{} if and only if $n \leq 19$ (Section~\ref{se:complete}). Thus, every graph with $n \leq 19$ vertices admits a \br{}. 

\item We describe a technique to construct a \br{} of every graph with pathwidth at most $7$, which can be computed in linear time (Section~\ref{se:pathwidth}). 

\item We then study \emph{2.5D grid box representations} (\emph{\gbr{}}) which are \br{}s such that the bottom face of every box is a unit square with corners at integer coordinates (Section~\ref{se:spiderman}). We show that an $n$-vertex graph that admits a \gbr{} has at most $4n - 6 \sqrt{n}$ edges and that this bound is tight. It is worth remarking that VRs where vertices are represented with a limited number of shapes have been investigated in the various models of visibility representations. Examples of these shape-restricted VRs are unit bar VRs~\cite{dv-ubvg-03}, unit square VRs~\cite{DBLP:journals/combinatorics/DeanEHP08}, and unit box VRs~\cite{DBLP:journals/ijcga/FeketeM99}. 

\item Finally, we prove that deciding whether a given graph $G$ admits a \gbr{} with a given footprint is NP-complete (Section~\ref{se:spiderman}). The \emph{footprint} of a \br{} $\Gamma$ is the set of bottom faces of the boxes in $\Gamma$. 
\end{itemize} 

\noindent For reasons of space, some proofs and details are omitted and can be found in the appendix.

\section{Preliminaries}\label{se:preliminaries}

A \emph{box} is a six-sided polyhedron of non-zero volume with axis-aligned sides in a 3D Cartesian coordinate system. In a \emph{2.5D box representation} (\emph{\br{}}) the vertices are mapped to boxes that lie in the non-negative half space $z \geq 0$ and include one face in the plane $z=0$, while each edge is mapped to a \emph{visibility} (i.e. a segment whose endpoints lie in faces of distinct boxes and whose interior does not intersect any box) parallel to the $x$- or to the $y$-axis. We remark that visibilities between non-adjacent objects may exist, i.e., we adopt the so called \emph{weak visibility model} (in the \emph{strong visibility model} each visibility between two geometric objects corresponds to an edge of the graph). The weak model seems to be the most effective when representing non-planar graphs and it has been adopted in several works (see e.g.~\cite{SoCG,JGAA-330,JGAA-343}). 
As in many papers on visibility representations~\cite{DBLP:journals/ijcga/FeketeM99,DBLP:journals/ipl/KantLTT97,DBLP:conf/stacs/StreinuW03,TamassiaTollis86,DBLP:conf/compgeom/Wismath85}, we assume the \emph{$\epsilon$-visibility model}, where each segment representing an edge is the axis of a positive-volume cylinder that intersects no box except at its ends; this implies that an intersection point between a visibility and a box belongs to the interior of a box face. 
In what follows, when this leads to no confusion, we shall use the term \emph{edge} to indicate both an edge and the corresponding visibility, and the term \emph{vertex} for both a vertex and the corresponding geometric object.

Given a box $b$ of a \br{}, the face that lies in the plane $z=0$ is called the \emph{footprint} of $b$. The intersection of the plane $z=0$ with a \br{} $\Gamma$ is called the \emph{footprint} of $\Gamma$ and is denoted by $\Gamma_0$.
In other words, the footprint of a \br{} $\Gamma$ consists of the footprint of all the boxes in $\Gamma$. 
If $\Gamma$ is a \br{} of a complete graph then its footprint $\Gamma_0$ satisfies a trivial necessary condition (throughout the paper we will refer to this condition as \emph{\nc{}}): for every pair of boxes $b_1$ and $b_2$ of $\Gamma$, there must exist a line $\ell$ (in the plane $z = 0$) such that (\emph{i}) $\ell$ passes through the footprints of $b_1$ and $b_2$, and ($ii$) $\ell$ is either parallel to the $x$-axis or to the $y$-axis.
A \emph{2.5D grid box representation} (\emph{\gbr{}}) is a \br{} such that every box has a footprint that is a unit square with corners at integer coordinates.

Two boxes \emph{see each other} if there exists a visibility between them; we say that they \emph{see each other above another box} $b$, if there exists a visibility between them and the projection of this visibility on the plane $z=0$ intersects the interior of the footprint of $b$. Notice that this implies that the two boxes are both taller than $b$. We say that two boxes have a \emph{ground visibility} or are \emph{ground visible} if there exists a visibility between their footprints, i.e. if there exists an unobstructed axis-aligned line segment connecting their footprints. If two boxes are ground visible then they see each other regardless of their heights and the heights of the other boxes. 
Let $G$ be a graph, let $\Lambda$ be a collection of boxes each lying in the non-negative half space $z \geq 0$ with one face in the plane $z = 0$, such that the boxes of $\Lambda$ are in bijection with the vertices of $G$. Note that $\Lambda$ may not be a \br{} of $G$. For a vertex $v$ of $G$, $\Lambda(v)$ denotes the corresponding box in $\Lambda$, while $h(\Lambda(v))$, or simply $h(v)$, indicates the height of this box. For a subset $S \subset V(G)$, $\Lambda(S)$  denotes the subset of boxes associated with $S$, while $\Lambda_0(S)$ is the footprint of $\Lambda(S)$. Let $G[S]$ be the subgraph of $G$ induced by $S$. We say that $\Lambda(S)$ is a \br{} of $G[S]$ in $\Lambda$, if for any edge $(u,v)$ of $G[S]$ there exists a visibility in $\Lambda$ between $\Lambda(u)$ and $\Lambda(v)$; that is, the visibility is not destroyed by the presence of the other boxes in $\Lambda$.


%
%

\section{2.5D Box Representations of Complete Graphs}\label{se:complete}

In this section we consider \br{}s of complete graphs and complete bipartite graphs. 


\begin{theorem}\label{th:bipartite}
Every complete bipartite graph admits a \br{}.
\end{theorem}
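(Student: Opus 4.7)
The plan is to give a direct construction in an ``L''-shape. Write the bipartition as $A = \{a_1,\ldots,a_m\}$ and $B = \{b_1,\ldots,b_n\}$. The $A$-boxes will be thin horizontal slabs stacked in the $y$-direction, all sharing the same $x$-extent, and the $B$-boxes will be thin vertical towers placed to the right of them, each in its own $x$-column, with strictly increasing heights from left to right. Only $x$-parallel visibilities will be needed, so no visibilities in the $y$-direction need to be checked.

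Concretely I will set the footprint of $a_i$ to the strip $[0,L]\times[2i-1,\,2i-\tfrac{1}{2}]$, with a common large height $h(a_i)=n+1$, and the footprint of $b_j$ to the column $[L+2j-1,\,L+2j-\tfrac{1}{2}]\times[0,\,2m]$, with height $h(b_j)=j$. The footprints are pairwise disjoint, since the $A$-footprints lie in distinct $y$-rows, the $B$-footprints lie in distinct $x$-columns, and the two regions do not overlap in $x$. The crux is then to verify that for every edge $(a_i,b_j)$ the $x$-parallel segment at $z=j-\tfrac{1}{2}$ inside $a_i$'s row is an unobstructed visibility: it departs from $a_i$'s right face at $x=L$, passes in the $xy$-projection above the towers $b_1,\ldots,b_{j-1}$ whose heights $1,\ldots,j-1$ are all strictly less than $j-\tfrac{1}{2}$, and lands on $b_j$'s left face at $x=L+2j-1$ since $z<h(b_j)=j$. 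No $a_{i'}$ with $i'\ne i$ blocks the segment because it lies in a different $y$-row, and a narrow positive-volume tube around it fits between the tops of $b_1,\ldots,b_{j-1}$ and the top of $b_j$, so the $\epsilon$-visibility model is satisfied.

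The main obstacle, and the whole point of this layout, is that a single visibility from $a_i$ to $b_j$ must clear all $j-1$ intervening $B$-towers while still landing on $b_j$ itself. The arithmetic staircase of heights $h(b_j)=j$ provides exactly this margin: the $z$-window $(j-1,j]$ is nonempty, lies above every smaller tower, and stays within $b_j$. The uniform tall height of the $A$-boxes ensures that $a_i$'s right face is large enough in $z$ to accommodate the required visibility to every $b_j$. No further difficulty arises because $K_{m,n}$ contains no edges inside $A$ or inside $B$, so we never need to reason about mutual visibility among boxes on the same side.
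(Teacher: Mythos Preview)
Your construction is correct and is essentially the same as the paper's: an $L$-shaped layout in which one side of the bipartition is given staircase heights so that every box on the other (uniformly tall) side can see past the shorter towers along an $x$-parallel line. The only difference from the paper is cosmetic---you put the staircase on the $B$-side and make all $A$-boxes tall, while the paper does the reverse (its $a_i$ have decreasing heights $m-i$ and all $b_j$ have height $m$)---but the verification is identical.
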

\begin{proof}
Let $K_{m,n}$ be a complete bipartite graph. We represent the $m$ vertices in the first partite set with $m$ boxes $a_0, a_1,$ $\ldots, a_{m-1}$ such that box $a_i$ has a footprint with corners at ($2i,0,0$),  ($2i+1,0,0$),  ($2i,2n-1,0$) and  ($2i+1,2n-1,0$) and height $m-i$.
Then we represent the $n$ vertices in the second partite set with $n$ boxes $b_0,b_1,\ldots,b_{n-1}$ such that box $b_j$ has a footprint with corners at ($2m,2j,0$),  ($2m+1,2j,0$),  ($2m,2j+1,0$) and ($2m+1,2j+1,0$) and height $m$. 
Consider now a box $a_i$ and a box $b_j$. By construction $a_i$ and $b_j$ see each other above all boxes $a_l$ with $l>i$.\qed
\end{proof}

A consequence of Theorem~\ref{th:bipartite} is that there exist graphs with unbounded thickness that admit a \br{}. This contrasts with other models of visibility representations (e.g., $k$-BVRs, and RVRs), which can only represent graphs with bounded thickness. 

\medskip

We now prove that the largest complete graph that admits a \br{} is $K_{19}$. We first show that given a \br{} of a complete graph there is one line parallel to the $x$-axis and one line parallel to the $y$-axis whose union intersect all boxes and such that each of them intersects at most $10$ boxes. This implies that there can be at most $20$ boxes in a \br{} of a complete graph. We then show that there must be a box that is intersected by both lines, thus lowering this bound to $19$. We finally exhibit a \br{} of $K_{19}$. We start with some technical lemmas. The proof of the next one can be found in Appendix~\ref{ap:complete}. 

\begin{lemma}\label{le:heights}
Let $G$ be an $n$-vertex graph that admits a \br{} $\Gamma'$. Then there exists a \br{} $\Gamma$ of $G$ such that every box of $\Gamma$ has a distinct integer height in the range $[1,n]$ and the footprint of $\Gamma$ is the same as that of $\Gamma'$.  
\end{lemma}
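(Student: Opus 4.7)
The plan is to obtain $\Gamma$ from $\Gamma'$ by modifying only the heights of the boxes, leaving the footprints (and hence all $xy$-coordinates) unchanged. I would first observe that the set of boxes whose footprints meet any given axis-parallel line in the plane $z=0$ depends only on the footprint of $\Gamma'$; consequently, for any two boxes and any potential axis-parallel sight line between them, the set of intervening boxes is the same under any height assignment that preserves the footprint.

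The key property of the $\epsilon$-visibility model that I would exploit is the following. Suppose $\Gamma'$ realizes an edge $(u,v)$ by a visibility along some axis-parallel line $\ell$ at height $\tilde{h}$. Since each visibility segment is the axis of a positive-volume cylinder meeting every box only in the interior of side faces, each intervening box $b$ along $\ell$ must satisfy $h'(b) < \tilde{h}$, while both endpoints must satisfy $h'(u), h'(v) > \tilde{h}$, all strictly. Hence, for every edge of $G$, every intervening box along its chosen sight line in $\Gamma'$ is strictly shorter than both endpoints.

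Next, I would sort the boxes of $\Gamma'$ in nondecreasing order of $h'$, breaking ties arbitrarily, to obtain a permutation $b_{\pi(1)}, \ldots, b_{\pi(n)}$, and define new heights by $h(b_{\pi(i)}) = i$. These heights are distinct integers in $[1,n]$ by construction. Because the sort preserves every strict inequality of the form $h'(b) < h'(u)$, for every edge $(u,v)$ and every intervening box $b$ along the sight line used in $\Gamma'$ we still have $h(b) < h(u)$ and $h(b) < h(v)$. Writing $B_{uv}$ for that set of intervening boxes, the open interval $\bigl(\max_{b \in B_{uv}} h(b),\ \min(h(u), h(v))\bigr)$ is therefore non-empty, and placing a new visibility at any height in it realizes $(u,v)$ in $\Gamma$ along the same line $\ell$ used in $\Gamma'$.

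I do not anticipate a substantial obstacle. The only subtle point is the tie-breaking step: ties in $h'$ must never place a visibility endpoint at the same height as one of its own intervening boxes, and this is immediate from the strict inequality established in the second paragraph. Assembling the pieces gives the desired representation $\Gamma$, with the same footprint as $\Gamma'$ and distinct integer heights in $\{1,2,\ldots,n\}$.
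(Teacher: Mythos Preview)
Your proposal is correct and follows essentially the same approach as the paper: sort the boxes by their original heights, reassign heights $1,\ldots,n$ in that order, and observe that the relevant strict height comparisons are preserved so no visibility is lost. Your write-up is in fact more explicit than the paper's, which simply notes that $h(b_i) < h(b_j)$ whenever $h'(b_i) \le h'(b_j)$ without spelling out the $\epsilon$-visibility argument for strictness or the safety of tie-breaking.
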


The following lemma is proved in~\cite[Obervation 1]{Kleitman2001}; we give an alternative proof in Appendix~\ref{ap:complete}. Given an axis-aligned rectangle $r$ in the plane $z=0$, we denote by $x(r)$ the $x$-extent of $r$ and by $y(r)$ the $y$-extent of $r$, so $r = x(r) \times y(r)$.  

\begin{lemma}\emph{\cite{Kleitman2001}}\label{le:cross}
For every arrangement $\R$ of $n$ axis-aligned rectangles in the plane
such that for all $a,b \in \R$, either $x(a) \cap x(b) \neq \emptyset$
or $y(a) \cap y(b) \neq \emptyset$, there exists a vertical and a
horizontal line whose union intersects all rectangles in~$\R$.
\end{lemma}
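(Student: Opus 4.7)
My plan is to reduce the lemma to a combinatorial partition problem via the one-dimensional Helly theorem: a finite family of intervals on the real line admits a common piercing point if and only if every two of its members intersect. Specifically, it suffices to partition $\R$ into two sets $A$ and $B$ such that the $x$-extents $\{x(r) : r \in A\}$ pairwise intersect and the $y$-extents $\{y(r) : r \in B\}$ pairwise intersect. Given such a partition, 1D Helly supplies $c \in \bigcap_{r \in A} x(r)$ and $d \in \bigcap_{r \in B} y(r)$, so the lines $x = c$ and $y = d$ together stab every rectangle of $\R$.

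I would first dispose of a symmetric special case. Suppose some single $r^{*} \in \R$ simultaneously minimizes the right endpoint of $x(r)$ and the right endpoint of $y(r)$; call these minima $c$ and $d$. Minimality of $c$ means that for every $r$ the right endpoint of $x(r)$ is at least $c$, so the vertical line $x = c$ crosses $r$ if and only if $x(r) \cap x(r^{*}) \neq \emptyset$; analogously for $y = d$. A rectangle left uncovered by both lines would therefore be both $x$-disjoint and $y$-disjoint from $r^{*}$, directly contradicting the hypothesis applied to the pair $\{r, r^{*}\}$.

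For the general case, let $r^{*} \in \R$ minimize the right endpoint of $x(r)$ and $s^{*} \in \R$ minimize the right endpoint of $y(r)$. If $r^{*} = s^{*}$ we are done by the previous paragraph, so assume $r^{*} \neq s^{*}$. With $c$ and $d$ chosen as the corresponding minima, any rectangle left uncovered by both $x = c$ and $y = d$ is $x$-disjoint from $r^{*}$ (hence $y$-intersects $r^{*}$) and $y$-disjoint from $s^{*}$ (hence $x$-intersects $s^{*}$), placing it in a specific ``northeastern'' region relative to both $r^{*}$ and $s^{*}$. To eliminate such leftovers I would proceed by induction on $|\R|$: remove $r^{*}$, apply the inductive hypothesis to $\R \setminus \{r^{*}\}$ to obtain stabbing lines $x = c'$, $y = d'$, and, if $r^{*}$ is not already stabbed, relocate the vertical line to the right endpoint of $x(r^{*})$ while simultaneously shifting the horizontal line so as to recover every rectangle ``lost'' by the vertical move. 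The minimality of $r^{*}$ forces each lost rectangle to be $x$-disjoint from $r^{*}$ and therefore to $y$-intersect $r^{*}$, which is the essential structural input.

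The main obstacle is showing that the required shift of the horizontal line exists: this amounts to exhibiting a common point in the intersection of $y(r^{*})$, the $y$-extents of every lost rectangle, and the $y$-extents of every rectangle originally covered only by $y = d'$. By 1D Helly this reduces to pairwise intersection of that family, which I would verify by a case analysis on which of the four open quadrants defined by the lines $x = c'$, $y = d'$ contains $r^{*}$, leveraging the hypothesis together with the extremality of $r^{*}$. I expect that strengthening the inductive hypothesis---for example, requiring the vertical line produced to pass through the right endpoint of some extremal rectangle---may be needed for the case analysis to close cleanly.
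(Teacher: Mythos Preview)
Your special-case argument is correct and clean: when a single rectangle $r^{*}$ simultaneously achieves both minima, the lines $x=c$ and $y=d$ work exactly as you say.

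The general case, however, has a genuine gap. Your strategy commits to relocating the vertical line to $x=c$, the right endpoint of $x(r^{*})$, and then searching for a compatible horizontal line. But $x=c$ may simply be the wrong vertical line: there need not exist \emph{any} horizontal line that, together with $x=c$, covers~$\R$. Take
\[
r^{*}=[0,1]\times[0,10],\qquad r_{1}=[2,3]\times[0,1],\qquad r_{2}=[2,3]\times[9,10].
\]
Every pair satisfies the hypothesis, and $r^{*}$ uniquely minimises the right $x$-endpoint, so $c=1$. The line $x=1$ stabs only $r^{*}$; covering $r_{1}$ and $r_{2}$ with a horizontal line would require $y(r_{1})\cap y(r_{2})\neq\emptyset$, which fails. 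Thus the ``main obstacle'' you flag is not a matter of case analysis on quadrants: the family of $y$-intervals you would feed to Helly is genuinely not pairwise intersecting. The strengthened induction you propose does not help either. Requiring the inductive vertical line through the extremal right endpoint of $\R\setminus\{r^{*}\}$ places it at $x=3$ but leaves $d'$ unconstrained; if $d'\notin[0,10]$ then $r^{*}$ is missed and you are forced back to $x=1$. And the stronger constraint that \emph{both} lines pass through the respective minima is already false for this three-rectangle arrangement (the pair $x=1$, $y=1$ misses $r_{2}$). The correct solution here, $x\in[2,3]$ with $y\in[0,10]$, puts the vertical line through the rectangle of \emph{largest} right $x$-endpoint, so no choice of extremal anchor for the vertical line is universally correct.

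The paper's proof avoids induction altogether. It picks $\ell_{v},\ell_{h}$ so that their union stabs the maximum number of rectangles and, subject to that, is as close as possible to a hypothetical uncovered rectangle $a$; it then names the two rectangles $b$ and $c$ that block $\ell_{h}$ and $\ell_{v}$, respectively, from sliding toward $a$, and derives a contradiction from the hypothesis applied to the single pair $\{b,c\}$. The idea you are missing is this global extremal choice of the \emph{pair} of lines: it localises the obstruction to two concrete rectangles, rather than trying to repair an inductively produced pair one coordinate at a time.
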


The following lemma is similar to the Erd\H{o}s--Szekeres lemma and can be proved in a similar manner~\cite{Fekete1995}. A sequence of distinct integers is \emph{unimaximal} if no element of the sequence is smaller than both its predecessor and successor.


\begin{lemma}\emph{\cite{Fekete1995}}\label{le:unimaximal}
For all $m > 1$, in every sequence of ${m \choose 2}+1$ distinct integers, there exists at least one unimaximal sequence of length $m$.
\end{lemma}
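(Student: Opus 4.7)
The plan is to adapt the classical Erd\H{o}s--Szekeres pigeonhole argument: I will attach to every term of the sequence a pair of positive integers and show that (a) if no unimaximal subsequence of length $m$ exists, then every such pair lies in a triangular region of size $\binom{m}{2}$, while (b) the pairs must be pairwise distinct. Together these two facts force the sequence to have length at most $\binom{m}{2}$, contradicting the hypothesis.

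Concretely, let $a_1,\ldots,a_N$ denote the given sequence with $N=\binom{m}{2}+1$. For each index $i$, let $p_i$ be the length of a longest strictly increasing subsequence ending at $a_i$, and let $q_i$ be the length of a longest strictly decreasing subsequence starting at $a_i$; both quantities are at least $1$. The central observation is that every unimaximal subsequence has a unique maximum element, say $a_j$, and decomposes into an increasing prefix up to $a_j$ (of length at most $p_j$) and a decreasing suffix from $a_j$ (of length at most $q_j$); conversely, concatenating a witness of $p_j$ with a witness of $q_j$ at $a_j$ produces a unimaximal subsequence of length $p_j+q_j-1$. Assuming, for contradiction, that no unimaximal subsequence has length $m$, we obtain $p_j+q_j\leq m$ for every $j$. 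The number of integer pairs $(p,q)$ with $p,q\geq 1$ and $p+q\leq m$ equals $(m-1)+(m-2)+\cdots+1=\binom{m}{2}$.

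The remaining ingredient---the distinctness of the pairs $(p_i,q_i)$---is the main (though modest) obstacle, and it is the step where the hypothesis of distinct integers is used. For $i<j$, either $a_i<a_j$ or $a_i>a_j$. In the first case, appending $a_j$ to a longest strictly increasing subsequence ending at $a_i$ yields a strictly longer one ending at $a_j$, so $p_j>p_i$. In the second case, prepending $a_i$ to a longest strictly decreasing subsequence starting at $a_j$ yields a strictly longer one starting at $a_i$, so $q_i>q_j$. In either case $(p_i,q_i)\neq (p_j,q_j)$. Combining the distinctness with the counting bound yields $N\leq \binom{m}{2}$, contradicting $N=\binom{m}{2}+1$ and completing the proof.
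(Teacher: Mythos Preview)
Your argument is correct and is precisely the Erd\H{o}s--Szekeres pigeonhole adaptation the paper alludes to (the paper does not spell out a proof but cites \cite{Fekete1995} and remarks that it ``can be proved in a similar manner''). The labeling by $(p_i,q_i)$, the bound $p_j+q_j\le m$ forcing the pairs into a triangle of size $\binom{m}{2}$, and the distinctness step are exactly the intended route.
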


Given a \br{} $\Gamma$ and a line $\ell$ parallel to the $x$-axis or to the $y$-axis, we say that $\ell$ \emph{stabs} a set of boxes $B$ of $\Gamma$ if it intersects the interior of the footprints of each box in $B$. Let $b_1, b_2, \dots,b_h$ be the boxes of $B$ in the order they are stabbed by $\ell$. We say that $B$ has a \emph{staircase layout}, if $h(b_i) > h(b_{i-1})$ for $i=2,3,\dots,h$.

\begin{lemma}\label{le:unimaximal-5}
In a \br{} of a complete graph no line parallel to the $x$-axis or to the $y$-axis can stab five boxes whose heights, in the order in which the boxes  are stabbed, form a unimaximal sequence.
\end{lemma}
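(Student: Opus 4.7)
My plan is to argue by contradiction. Suppose a line $\ell$, WLOG parallel to the $x$-axis and at $y=y^*$, stabs five boxes $b_1,\ldots,b_5$ in this order, with heights $h_1,\ldots,h_5$ forming a unimaximal sequence. I will exhibit a pair that cannot mutually see each other, contradicting the fact that we have a \br{} of a complete graph.

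Two preliminary geometric facts. (i) Any two boxes stabbed by $\ell$ have interior-disjoint $x$-extents, because their footprints are interior-disjoint yet both strictly straddle the line $y=y^*$; in particular a $y$-visibility between them is impossible, so every visibility within $\{b_1,\ldots,b_5\}$ must be parallel to the $x$-axis. (ii) In a unimaximal sequence of length $5$, for every $1\leq i<k<j\leq 5$ one has $h_k > \min(h_i,h_j)$. This follows from a short case analysis on the location of the peak $p$; the interesting sub-case is $i<p<j$, where the strict ascent before $p$ and the strict descent after $p$ each yield the inequality according to where $k$ sits. Consequently, an $x$-visibility for $(b_i,b_j)$, which must lie at height $z<\min(h_i,h_j)$, cannot pass above any intermediate $b_k$; its $y$-coordinate $y_0$ must therefore lie in $Y_i\cap Y_j\setminus \bigcup_{i<k<j}Y_k$, where $Y_i$ denotes the $y$-extent of $b_i$.

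For the combinatorial core, I would write $Y_i = [y^*-L_i,\,y^*+R_i]$ with $L_i,R_i>0$. A valid $y_0$ for pair $(i,j)$ exists on the right of $y^*$ iff $\min(R_i,R_j)>\max_{i<k<j}R_k$ (an ``R-bypass''), and on the left iff the symmetric L-inequality holds; each pair needs at least one. WLOG $(1,5)$ is an R-bypass, giving $R_1,R_5>R_2,R_3,R_4$. Observe $(1,4)$ and $(2,5)$ cannot both be L-bypasses, since the L-conditions would give $L_4>L_2$ and $L_2>L_4$; WLOG $(1,4)$ is R, so $R_4>R_2,R_3$. Then $(2,5)$ is forced to L (R would give $R_2>R_4$ versus $R_4>R_2$) and $(3,5)$ is forced to L (R would give $R_3>R_4$ versus $R_4>R_3$). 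From $(2,5)$ L we get $L_2>L_3$, which forces $(1,3)$ to R (L would demand $L_3>L_2$), yielding $R_3>R_2$. From $(3,5)$ L we get $L_3>L_4$. Now $(2,4)$ has no valid bypass: R would require $R_2>R_3$, impossible since $R_3>R_2$; L would require $L_4>L_3$, impossible since $L_3>L_4$. Hence $b_2$ and $b_4$ cannot see each other, a contradiction. The alternative WLOG choices ($(2,5)$ R instead of $(1,4)$, and $(1,5)$ L) are handled by the obvious $L\leftrightarrow R$ and $i\leftrightarrow 6-i$ symmetries.

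The main obstacle is organizing the forced assignments of bypass sides so that the cascade closes on a single pair rather than spreading out inconclusively; picking $(1,5)$, $(1,4)$ as the anchor of the WLOG reduction is what ultimately pins the blame on $(b_2,b_4)$. A secondary point to isolate is the ``too-tall'' claim in step (ii), which is exactly where unimaximality gets used, since it eliminates the alternative escape route of simply passing the visibility over an intermediate box.
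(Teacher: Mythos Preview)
Your proof is correct but takes a genuinely different route from the paper.  Both arguments begin the same way: from unimaximality one gets $h_k>\min(h_i,h_j)$ for every intermediate $k$ (your step~(ii); the paper states the contrapositive), so no $x$-visibility between $b_i$ and $b_j$ can pass over an intermediate box, and hence its $y$-coordinate must lie in $Y_i\cap Y_j\setminus\bigcup_{i<k<j}Y_k$.  From this common point the two proofs diverge.  The paper recognizes that this condition says precisely that the left sides of the five footprints form a bar visibility representation of $K_5$, and finishes by invoking the classical fact that only planar graphs admit BVRs.  You instead carry out a direct elimination: each of the six non-adjacent pairs needs an R-bypass or an L-bypass, and your cascade of forced choices (anchored by the WLOG on $(1,5)$ and $(1,4)$, with the $L\leftrightarrow R$ and $i\leftrightarrow 6-i$ symmetries covering the remaining branches) leaves $(b_2,b_4)$ with neither.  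In effect you are re-proving by hand that five bars through a common point cannot realize $K_5$.  Your argument is fully self-contained; the paper's is considerably shorter because it outsources the combinatorial core to a known structural result.
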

\begin{proof}
Assume, as a contradiction, that there exists a line $\ell$ parallel to the $x$-axis or to the $y$-axis that stabs $5$ boxes $b_1$, $\dots$, $b_5$ whose heights form a unimaximal sequence in the order in which the boxes are stabbed by $\ell$.  Let $r_{i}$ be the footprint of box $b_{i}$ (with $1 \leq i \leq 5$). We claim that there exists a ground visibility between every pair of boxes $b_i$ and $b_{j}$ (with $1 \leq i < j \leq 5$). If $j=i+1$ this is clearly true. Suppose then that $j \neq i+1$. If $b_i$ and $b_j$ do not have a ground visibility, then they must see each other above $b_l$ with $i < l < j$, i.e., the height of $b_i$ and of $b_j$ must be larger than the height of $b_l$, which is impossible because the sequence of heights is unimaximal. Thus, for every pair of boxes $b_i$ and $b_j$ there must be a ground visibility. Since $b_i$ and $b_j$ are both stabbed by $\ell$, this visibility must be parallel to $\ell$. This implies that the left sides (if $\ell$ is parallel to the $x$-axis) or the bottom sides (if $\ell$ is parallel to the $y$-axis) of rectangles $r_{1}, r_{2}, r_{3}, r_{4}, r_{5}$ form a bar visibility representation of $K_5$, which is impossible because bar visibility representations exist only for planar graphs~\cite{GareyJS76}.
\qed \end{proof}

\begin{lemma}\label{le:stabber10}
In a \br{} of a complete graph no line parallel to the $x$-axis or to the $y$-axis can stab more than $10$ boxes.
\end{lemma}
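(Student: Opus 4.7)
The plan is to combine Lemma~\ref{le:heights}, Lemma~\ref{le:unimaximal}, and Lemma~\ref{le:unimaximal-5} in a short pigeonhole argument, with no geometric work beyond what those lemmas already provide.

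First, by Lemma~\ref{le:heights} I may assume without loss of generality that the \br{} of the complete graph has been normalized so that every box has a distinct integer height. Suppose, for a contradiction, that some axis-parallel line $\ell$ stabs at least $11$ boxes. Listing these boxes in the order in which $\ell$ enters their footprints produces a sequence of at least $11$ pairwise distinct integer heights.

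Now I apply Lemma~\ref{le:unimaximal} with $m = 5$: since $\binom{5}{2}+1 = 11$, every sequence of $11$ distinct integers contains a unimaximal subsequence of length $5$. Translating back to the geometry, this singles out five boxes, all stabbed by $\ell$, whose heights form a unimaximal sequence in the order in which $\ell$ meets them. That is precisely the configuration forbidden by Lemma~\ref{le:unimaximal-5}, so we reach a contradiction, and therefore $\ell$ stabs at most $10$ boxes.

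The only point that deserves a brief comment is that the five boxes delivered by Lemma~\ref{le:unimaximal} need not be consecutive along $\ell$; other stabbed boxes may be interleaved with them. This is not an actual obstacle, since Lemma~\ref{le:unimaximal-5} is stated for \emph{any} five boxes stabbed by a common line whose heights are unimaximal in the stabbing order, with no requirement that they be consecutive on the line. Consequently, the pigeonhole step applies verbatim, and nothing beyond the three cited lemmas is required. The hardest part is therefore essentially bookkeeping: checking that the Erd\H{o}s--Szekeres-type bound $\binom{5}{2}+1$ matches the threshold $11$ that makes Lemma~\ref{le:unimaximal-5} bite.
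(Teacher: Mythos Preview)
Your argument is correct and essentially identical to the paper's own proof: normalize heights via Lemma~\ref{le:heights}, take more than $\binom{5}{2}$ stabbed boxes, extract a unimaximal subsequence of length~$5$ by Lemma~\ref{le:unimaximal}, and contradict Lemma~\ref{le:unimaximal-5}. Your extra remark that the five extracted boxes need not be consecutive along $\ell$ is a valid observation; the paper relies on the same reading of Lemma~\ref{le:unimaximal-5} without making it explicit.
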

\begin{proof}
Let $\Gamma$ be a \br{} of a complete graph $K_n$. By Lemma~\ref{le:heights} we can assume that all boxes have distinct integer heights. Suppose, as a contradiction, that there exists a line $\ell$ parallel to the $x$-axis or to the $y$-axis that stabs $k>10$ boxes. Let $h_1, h_2, \dots, h_k$ be the heights of the stabbed boxes in the order in which the boxes are stabbed by $\ell$.  By Lemma~\ref{le:unimaximal} this sequence of heights contains a unimaximal sequence of length $5$, but this is impossible by Lemma~\ref{le:unimaximal-5}.
\qed \end{proof}

\begin{lemma}\label{le:k20}
A complete graph admits a \br{} only if it has at most $19$ vertices. 
\end{lemma}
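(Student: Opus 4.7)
\medskip
\noindent\emph{Proof plan.} The plan follows the two-step road-map sketched before the statement: first obtain $n \le 20$ using Lemmas~\ref{le:cross} and~\ref{le:stabber10}, and then save one further box by forcing a vertex stabbed by both lines.

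Suppose, for contradiction, that $\Gamma$ is a \br{} of $K_n$ with $n = 20$. The footprint of $\Gamma$ satisfies \nc{}, so by Lemma~\ref{le:cross} there exist a line $\ell_h$ parallel to the $x$-axis and a line $\ell_v$ parallel to the $y$-axis whose union stabs every footprint of $\Gamma$. Let $A$ and $B$ be the sets of boxes stabbed by $\ell_h$ and by $\ell_v$ respectively. By Lemma~\ref{le:stabber10}, $|A|, |B| \le 10$, so $|A \cup B| = 20$ forces $|A| = |B| = 10$ and $A \cap B = \emptyset$.

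To derive a contradiction, I would consider the non-empty open intervals $J = \bigl(\max_{a \in A} y^-_a,\, \min_{a \in A} y^+_a\bigr)$ and $I = \bigl(\max_{b \in B} x^-_b,\, \min_{b \in B} x^+_b\bigr)$, that is, the sets of $y$-coordinates (resp.\ $x$-coordinates) at which a horizontal (resp.\ vertical) line still stabs all of $A$ (resp.\ all of $B$). If some $b \in B$ had $y$-extent meeting $J$, a horizontal line at any common $y$-value would stab $A \cup \{b\}$, i.e.\ $11$ boxes, contradicting Lemma~\ref{le:stabber10}. Hence every $b \in B$ has $y$-extent disjoint from $J$, and symmetrically every $a \in A$ has $x$-extent disjoint from $I$. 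This carves $A$ into $A_L \cup A_R$ (left/right of the vertical slab spanned by $I$) and $B$ into $B_U \cup B_D$ (above/below the horizontal slab spanned by $J$).

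The hard part is turning this rigid four-group layout into a contradiction. Since one of $|A_L|, |A_R|, |B_U|, |B_D|$ is at least $5$, the plan is to combine the \nc{} condition on cross-pairs $(a,b) \in A \times B$ (which, given the strip-avoidance, pins the relative positions of the four groups quite tightly) with a continuous sweep of $\ell_h$ just past the top of $J$ into $B_U$, or symmetrically past its bottom into $B_D$: such a move loses at least the $A$-box attaining $\min_{a \in A} y^+_a$ and must eventually gain some $B_U$-box. Tracking which boxes enter and which leave along the sweep, I would aim to expose either a line stabbing $11$ boxes (contradicting Lemma~\ref{le:stabber10}) or a stabbed $5$-tuple whose heights, in stabbing order, form a unimaximal sequence (contradicting Lemma~\ref{le:unimaximal-5}). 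The main obstacle is controlling this sweep cleanly: one must argue that the loss/gain balance cannot simultaneously keep the total stabbed count at most $10$ \emph{and} avoid producing a unimaximal $5$-sequence, and this seems to require a careful case analysis driven by which of the four groups is large.
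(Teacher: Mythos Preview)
Your setup through the four-group decomposition is correct and coincides with the paper's: you arrive at exactly the sets the paper calls $L,R$ (your $A_L,A_R$) and $T,B$ (your $B_U,B_D$), with the same ``no box of $B$ meets the horizontal slab $J$'' observation (the paper phrases this as: otherwise $\ell_v$ would stab eleven boxes).

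The gap is in the finishing step. Your sweep idea is not wrong in spirit, but as stated it is not an argument: you have not said which line you sweep, what invariant you maintain, or why the loss/gain bookkeeping must eventually produce either an $11$-stab or a unimaximal $5$-tuple. Since both obstructions (Lemmas~\ref{le:stabber10} and~\ref{le:unimaximal-5}) are about a \emph{single} axis-parallel line, a sweep of $\ell_h$ into $B_U$ only ever sees boxes in $A\cup B_U$, and nothing you have written forces the heights along such a line to be unimaximal. The case analysis you anticipate (``driven by which of the four groups is large'') is the whole content of the lemma, and you have not supplied it.

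The paper's route is different and more structural. It looks at the four innermost boxes $t_1,r_1,b_1,l_1$ (closest to the crossing point $p=\ell_h\cap\ell_v$) and uses \nc{} together with Lemma~\ref{le:stabber10} to show their footprints interlock in a pinwheel pattern: each $\ell_X$ (the side of $f_X$ nearest $p$) stabs the next $f_Y$ cyclically. Then, for each group, it defines two configurations depending on whether some box of, say, $B\setminus\{b_1\}$ stays entirely to the right of the far side $\ell'_L$ of $f_L$. Configuration~A forces $L\setminus\{l_1\}$ to have a \emph{staircase layout} (increasing heights away from $p$), because $b_1$ must see every such $l_i$ and they all share the $y$-extent of $f_B$; Configuration~B forces $B\setminus\{b_1\}$ itself to be a staircase, by the symmetric argument with $l_1$. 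Running this around the pinwheel shows that two \emph{opposite} groups, say $T'$ and $B'$, are both staircases. If either has size $\ge 5$ you get a unimaximal $5$-sequence on $\ell_v$ directly; if $|T'|=|B'|=4$, comparing $h(t_2)$ with $h(b_2)$ splices one box onto the other staircase to make a unimaximal $5$-sequence. Either way Lemma~\ref{le:unimaximal-5} gives the contradiction. The key idea you are missing is this forced staircase structure on opposite groups; the sweep does not obviously produce it.
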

\begin{proof}
Let $\Gamma$ be a \br{} of a complete graph $K_n$ (for some $n>0$). By Lemma~\ref{le:heights} we can assume that all boxes of $\Gamma$ have distinct heights.  The footprint $\Gamma_0$ of $\Gamma$ is an arrangement of rectangles that satisfies Lemma~\ref{le:cross}. Thus there exist a line $\ell_h$ parallel to the $x$-axis and a line $\ell_v$ parallel to the $y$-axis that together stab all boxes of $\Gamma$. By Lemma~\ref{le:stabber10}, both $\ell_h$ and $\ell_v$ can stab at most $10$ boxes each.  This means that the number of boxes (and therefore the number of vertices of $K_n$) is at most $20$. We now prove that if $\ell_h$ and $\ell_v$ both stab ten boxes, there must be one box that is stabbed by both $\ell_h$ and $\ell_v$, which implies that the number of boxes in $\Gamma$ is at most $19$.

Suppose, for a contradiction, that $p = \ell_h \cap \ell_v$ does not lie in a box. Refer to Figure~\ref{fi:k20-1} for an illustration. Denote by $T$ the set of boxes stabbed by $\ell_v$ that are above $p$ and by $B$ be the set of boxes stabbed by $\ell_v$ that are below $p$. Analogously, denote by $L$ the set of boxes stabbed by $\ell_h$ that are to the left of $p$ and by $R$ the set of boxes stabbed by $\ell_h$ that are to the right of $p$. Each of these sets can be empty but $|T|+|B|=10$ and $|L|+|R|=10$. Denote by $l_1, l_2, \dots, l_{|L|}$ the set of boxes in $L$ from right to left, i.e., $l_1$ is the box closest to $p$. Analogously, denote by $r_1, r_2, \dots, r_{|R|}$ the boxes of $R$ from left to right ($r_1$ is the closest to $p$), by $t_1, t_2, \dots, t_{|T|}$ the boxes of $T$ from bottom to top ($t_1$ is the closest to $p$) and by $b_1, b_2, \dots, b_{|B|}$ the boxes of $B$ from top to bottom ($b_1$ is the closest to $p$). Let $f_T$, $f_B$, $f_L$, and $f_R$ be the footprints of $t_1$, $b_1$, $l_1$, and $r_1$, respectively. Let $\ell_X$ be the line containing the side of $f_X$ that is closest to $p$ and let $\ell'_X$ be the line containing the opposite side of $f_X$ (for every $X \in \{T,B,L,R\}$). 

We first claim that for each $f_X$ there exists a line $\ell_Y$ (with $X,Y \in \{T,B,L,R\}$ and $Y \neq X$) that intersects the interior of $f_X$. Suppose, for a contradiction, that this is not true for at least one $f_X$, say $f_L$; that is, the interior of $f_L$ is not intersected by $\ell_T$ and $\ell_B$. If so, there must be a line $\ell$ parallel to the $y$-axis that intersects all the rectangles in $T \cup B$ and $f_L$; otherwise the necessary condition \nc{} does not hold for $T \cup B \cup \{l_1\}$. But then $\ell$ would stab eleven boxes, which is impossible by Lemma~\ref{le:stabber10}. Thus, our claim holds and the four rectangles $f_X$ are placed so that $\ell_T$, $\ell_R$,
$\ell_B$, and $\ell_L$ stab $f_R$, $f_B$, $f_L$, and $f_T$ (or, symmetrically, $f_L$, $f_T$, $f_R$, and $f_B$, which follows a symmetric argument), respectively, as in Figure~\ref{fi:k20-1}.

\begin{figure}[t]
\centering
\begin{subfigure}[b]{.3\linewidth}
\centering
\includegraphics[width=\columnwidth,page=1]{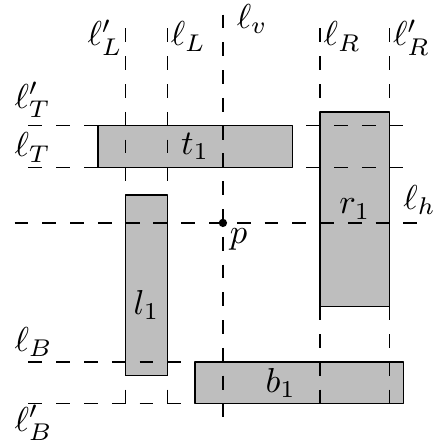}
\caption{}\label{fi:k20-1}
\end{subfigure}%
\hfil
\begin{subfigure}[b]{.3\linewidth}
\centering
\includegraphics[width=\columnwidth,page=2]{k20}
\caption{}\label{fi:k20-2}
\end{subfigure}
\hfil
\begin{subfigure}[b]{.3\linewidth}
\centering
\includegraphics[width=\columnwidth,page=3]{k20}
\caption{}\label{fi:k20-3}
\end{subfigure}
\caption{(a) Placement of the four rectangles $f_T$, $f_R$, $f_B$, and $f_L$. (b) Configuration A for the boxes of set $B$. (c) Configuration B for the boxes of set $B$. The arrow intersects the boxes that must have a staircase layout.}\label{fi:k20}
\end{figure}

We consider now the sets $T$, $B$, $L$, and $R$. For each set there are two possible configurations. Consider the set $B$ and the line $\ell'_L$. If the set $B' = B \setminus \{b_1\}$ contains a box $b_j$ whose footprint is completely to the right of $\ell'_L$, we say that $B$ has \emph{configuration A} (see Figure~\ref{fi:k20-2}). In the case of configuration A,  the footprint of all boxes in $L'=L \setminus \{l_1\}$ must extend below the line $\ell'_B$ (otherwise the necessary condition \nc{} does not hold for $L' \cup \{b_j\}$). This implies that $y(f_B)$ is contained in $y(l_i)$ for all $i \geq 2$. The only possibility for $b_1$ to see all these boxes is that $L'$ has a staircase layout (with $l_2$ being the shortest box) and $b_1$ is taller than the second tallest one. So, configuration A for the set $B$ implies that $L'$ has a staircase layout.  
If all boxes of $B'$ have a footprint that extends to the left of $\ell'_L$, we say that $B$ has \emph{configuration B} (see Figure~\ref{fi:k20-3}). In this case, $x(f_L)$ is contained in $x(b_i)$ for all $i \geq 2$. Again, the only possibility for $l_1$ to see all these boxes is that $B'$ has a staircase layout and that $l_1$ is taller than the second tallest one. So, configuration B for the set $B$ implies that $B'$ has a staircase layout. 
The definitions of configurations A and B for $T$, $L$, $R$ are similar to those for $B$ and arise by considering lines $\ell'_R$, $\ell'_T$, $\ell'_B$, respectively.

For any two sets $X$ and $Y$ that are consecutive in the cyclic order $T$, $R$, $B$, $L$, either $X'$ or $Y'$ has a staircase layout (depending on whether $X$ has configuration A or B). This implies that either $B'$ and $T'$ have both a staircase layout or $L'$ and $R'$ have both a staircase layout.   
Suppose that $B'$ and $T'$ have a staircase layout (the case when $L'$ and $R'$ have a staircase layout is analogous). If either $|B'| \geq 5$ or $|T'| \geq 5$, $\ell_v$ stabs at least five boxes whose heights form a unimaximal sequence, which is impossible by Lemma~\ref{le:unimaximal-5}. Thus $|B'|=4$ and $|T'|=4$ (recall that $|B'|+|T'|=8$).  Since all boxes of $\Gamma$ have distinct heights, either $h(b_2) < h(t_2)$ or $h(t_2) < h(b_2)$. In the first case $\ell_v$ stabs the five boxes $t_5, t_4, t_3, t_2, b_2$ whose heights form a unimaximal sequence, which is impossible by Lemma~\ref{le:unimaximal-5}. In the other case $\ell_v$ stabs the five boxes $b_5, b_4, b_3, b_2, t_2$ whose heights form a unimaximal sequence, which is impossible by Lemma~\ref{le:unimaximal-5}.\qed       
\end{proof}

We conclude this section by exhibiting a \br{} of $K_{19}$, illustrated in Figure~\ref{fig:K19}. To prove the correctness of the drawing the idea is to partition the vertex set of $K_{19}$ into five subsets (shown in Figure~\ref{fig:K19}) and prove that all boxes in a given set see all other boxes (details are in Appendix~\ref{ap:complete}). The following theorem holds.

\begin{figure}[t]
\centering
\includegraphics[width=0.8\textwidth]{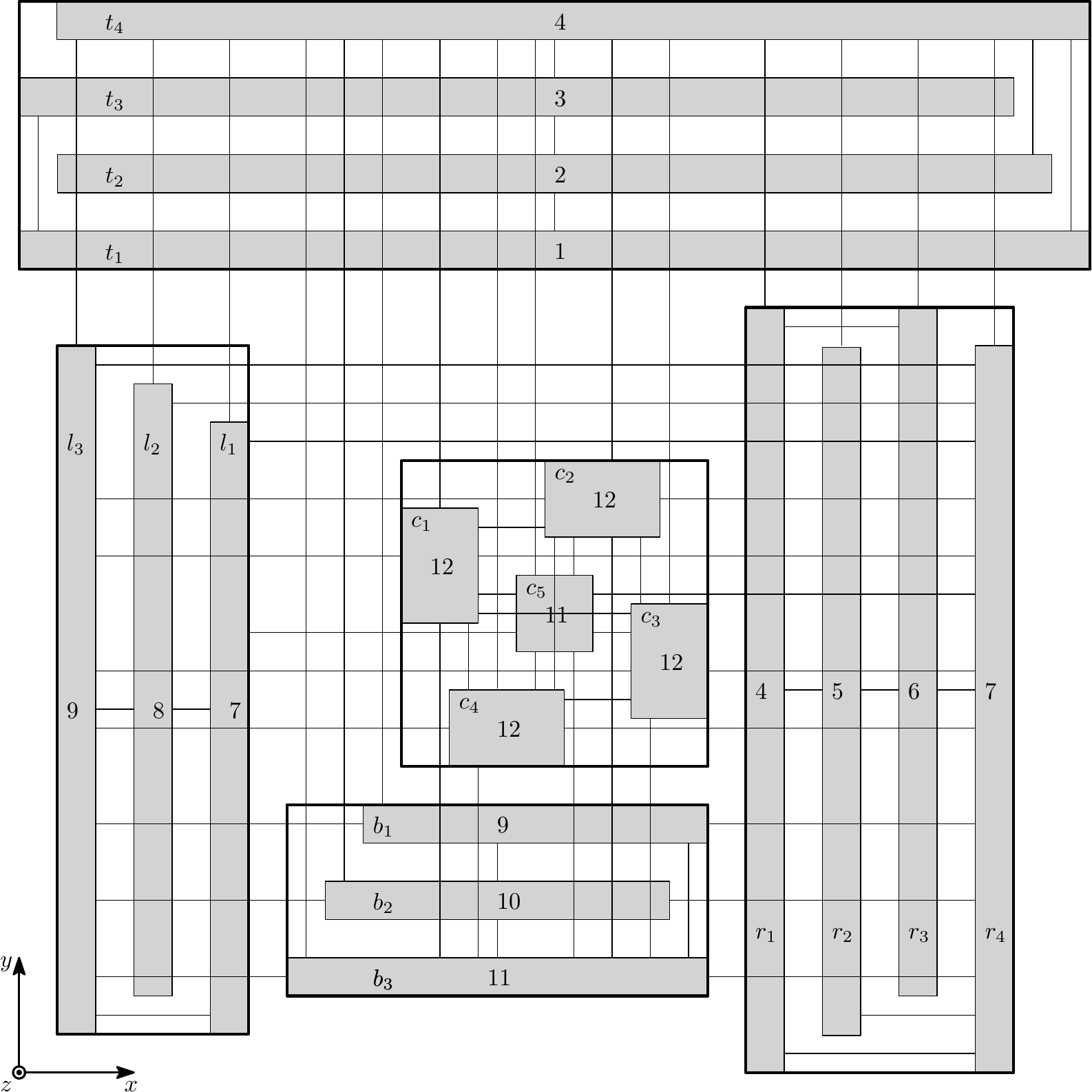}
\caption{\label{fig:K19}Illustration of a \br{} of $K_{19}$, the footprint is represented by a 2D drawing in the plane $z = 0$, while the heights of boxes are indicated by integer labels. The five rectangles with thick sides represent the partitioning of $V(K_{19})$ into five subsets.}
\end{figure}

\begin{theorem}
\label{th:complete}
A complete graph $K_n$ admits a \br{} if and only if $n \leq 19$. 
\end{theorem}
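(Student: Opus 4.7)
The plan is to handle the two directions separately. The "only if" direction is immediate from Lemma~\ref{le:k20}, which has already established that any \br{} of a complete graph contains at most $19$ boxes. So the substantive content is the "if" direction: exhibiting an explicit \br{} of $K_{19}$.

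For the construction I would follow the scheme suggested by Figure~\ref{fig:K19}. First I would list, as explicit coordinates, the footprints of the $19$ axis-aligned rectangles in the plane $z=0$, partitioned into five subsets chosen so that the boxes within each subset share a long common $x$- or $y$-extent (making in-group ground visibilities immediate) and so that between-group visibilities reduce to a small number of structurally identical sight configurations. Using Lemma~\ref{le:heights}, I may assume without loss of generality that the $19$ boxes receive distinct integer heights in $\{1,2,\dots,19\}$, and I would fix a specific assignment as indicated by the integer labels in the figure.

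To prove correctness I would show, for every pair of boxes $(b_i,b_j)$, that at least one axis-parallel unobstructed segment connects them. The verification splits into two kinds of argument. For each pair with a ground visibility (a direct unobstructed axis-parallel segment at $z=0$), the claim is immediate from the coordinates. For each pair without a ground visibility, I would identify an axis-parallel line $\ell$ parallel to the $x$- or $y$-axis that stabs both footprints as well as the footprints of some intervening boxes, and argue that the height assignment makes $b_i$ and $b_j$ both strictly taller than every intervening box stabbed by $\ell$ between them — so the two boxes see each other above the intermediate ones. I would organise the bookkeeping by the partition into the five subsets $S_1,\dots,S_5$, handling the in-group pairs first (where a common extent typically gives ground visibility) and then the between-group pairs.

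The main obstacle is purely combinatorial: there are $\binom{19}{2}=171$ pairs to verify, and a pair-by-pair analysis is unwieldy without structure. The partition into five subsets is exactly the device that tames the count, since once the $x$- and $y$-extents within each subset are aligned, the in-group verification collapses to a single observation per subset, and the between-group pairs fall into a handful of symmetric cases based on the relative placement of the two subsets. I expect that, as the authors do, I would present the explicit coordinates and heights in the main text and defer the pair-by-pair visibility check — together with the bound $n \leq 19$ from Lemma~\ref{le:k20} — to the appendix, obtaining the equivalence $K_n$ admits a \br{} $\Leftrightarrow$ $n \leq 19$.
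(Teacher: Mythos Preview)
Your proposal is correct and follows essentially the same approach as the paper: the ``only if'' direction is Lemma~\ref{le:k20}, and the ``if'' direction is the explicit construction of Figure~\ref{fig:K19}, verified by partitioning the $19$ boxes into five subsets (the paper uses $T,R,L,B,C$ with sizes $4,4,3,3,5$), checking in-group ground visibilities, and checking between-group visibilities via staircase-layout height arguments. One minor remark: invoking Lemma~\ref{le:heights} is unnecessary here, since that lemma normalizes an \emph{existing} representation, whereas in the construction you are free to choose the heights outright.
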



\section{2.5D Box Representations of Graphs with Pathwidth at Most 7}\label{se:pathwidth}

A graph $G$ with pathwidth $p$ is a subgraph of a graph that can be constructed as follows. Start with the complete graph $K_{p+1}$ and classify  all its vertices as \emph{active}. At each step, a vertex is \emph{deactivated} and a new active vertex is introduced and joined to all the remaining active vertices. The order in which vertices are introduced is given by a \emph{normalized path decomposition}, which can be computed in linear time for a fixed $p$~\cite{Gupta2005242}. For a definition of normalized path decomposition see Appendix~\ref{ap:pathwidth}. 

\begin{figure}[t]
\centering
\includegraphics[width=0.9\textwidth, page=2]{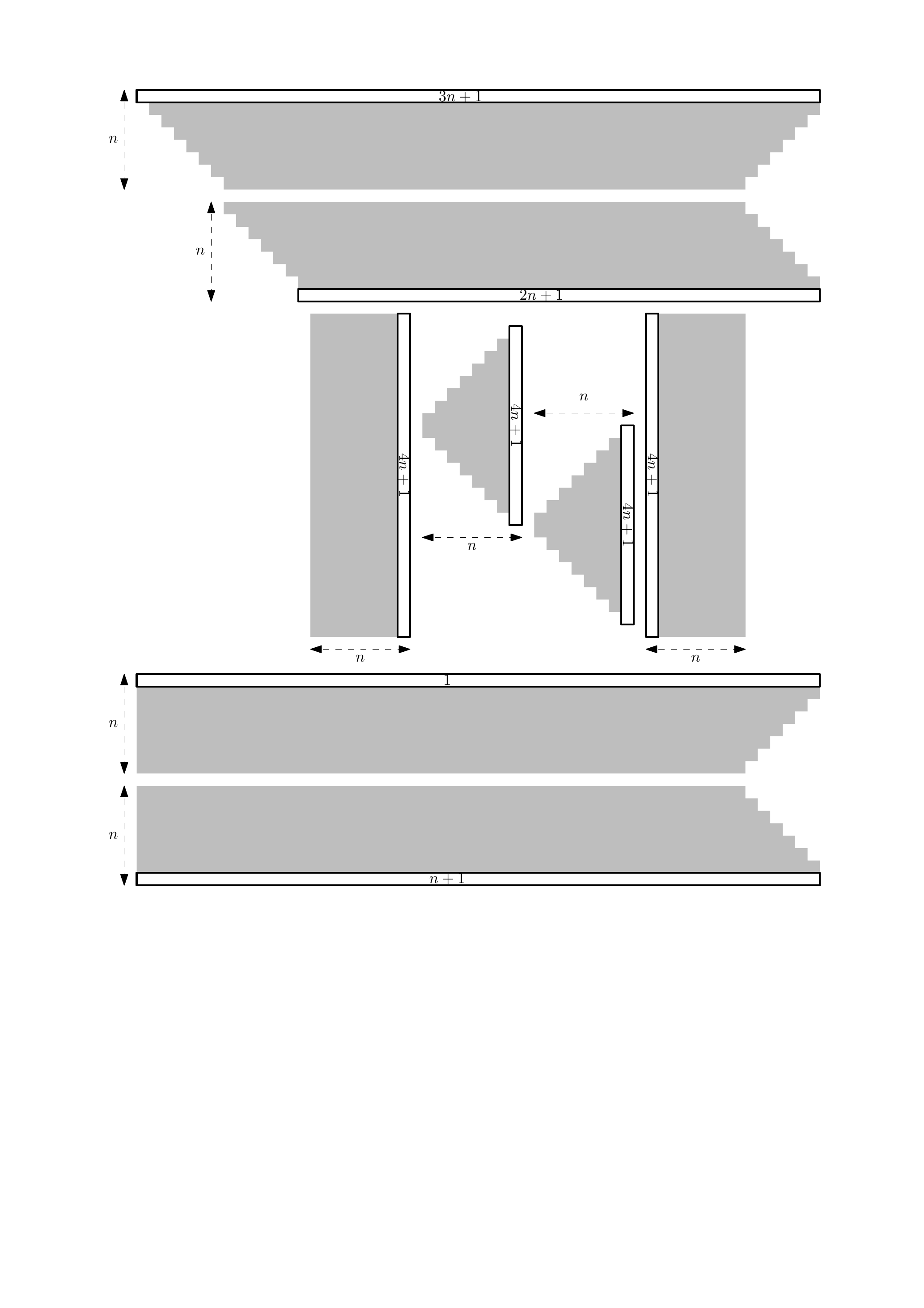}
\caption{Construction of a \br{} for a graph with pathwidth $7$. }
\label{fi:pathwdith}
\end{figure}

\begin{theorem}\label{th:pathwidth}
Every $n$-vertex graph with pathwidth at most $7$ admits a \br{}, which can be computed in $O(n)$ time.
\end{theorem}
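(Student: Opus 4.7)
The plan is to exploit the inductive structure of a normalized path decomposition of $G$, as recalled in the theorem statement. After computing such a decomposition in $O(n)$ time via~\cite{Gupta2005242}, we obtain a sequence of $O(n)$ insertion/deactivation steps in which at every moment the set of active vertices has size at most $8$ and induces a clique. We process the decomposition left-to-right while maintaining the invariant that the $8$ currently active vertices are represented by boxes that form a 2.5D-BR of $K_8$ in the current drawing, and that every edge of $G$ incident to a vertex that has already been deactivated has already been realized by a visibility that no subsequent operation will destroy.

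To realize the invariant I would fix, once and for all, a geometric template with exactly $8$ slots intended to host the boxes of the active vertices, of the kind hinted at in Figure~\ref{fi:pathwdith}. A natural choice is to split the $8$ slots into two groups: four ``horizontal'' slots whose footprints are thin rectangles long in the $x$-direction arranged as a staircase in $y$ (so that any two of them see each other along the $x$-axis above a cheaper one), and four ``vertical'' slots, perpendicular to the first group, so that any box in the second group sees any box in the first along the direction transverse to its own. With heights assigned in a strictly monotone staircase within each group, and with a suitable choice of where the two groups meet, one can verify by a small case analysis (analogous to, but much easier than, the $K_{19}$ construction) that these $8$ boxes pairwise see one another.

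To handle an insertion/deactivation step, I place the newly activated vertex in the slot freed by the deactivated vertex, but at a fresh extreme coordinate (strictly larger, in the direction of that slot, than any coordinate used so far). Its height is chosen just above that of the tallest other active box, which guarantees simultaneous visibility to all $7$ remaining actives along the axes dictated by the template. Simultaneously, the box of the deactivated vertex is ``finalized'' by freezing its footprint and shrinking its height below that of all surviving active boxes, so that it can no longer obstruct any visibility among active vertices, present or future; visibilities representing edges between this deactivated vertex and earlier deactivated ones are already frozen in the layout. Since each step does $O(1)$ work and produces one new box of constant complexity, the overall construction is linear time.

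The main obstacle is the geometric bookkeeping: one must certify (i) that within the $8$-slot template the active boxes really do form a $K_8$ visibility representation for every possible history of insertions and deactivations, (ii) that the shrink-and-freeze operation on a deactivated box cannot cut any previously realized visibility between two still-relevant boxes, and (iii) that the ever-growing extreme coordinates used for new boxes do not force the staircase of heights within a slot to exceed the $n$ heights allowed by Lemma~\ref{le:heights}. The pathwidth bound of $7$ enters exactly here: it is what makes an $8$-slot template sufficient and what lets one reuse the constant-size arrangement indefinitely along the path decomposition; all remaining verifications are a finite case check on the template together with a straightforward induction on the length of the decomposition.
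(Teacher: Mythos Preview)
Your high-level plan (process a normalized path decomposition, keep eight slots for the active vertices, and insert the next vertex into the slot vacated by the deactivated one) matches the paper's, but the concrete mechanism you propose for keeping old visibilities alive does not work. The step ``shrink the deactivated box below all surviving active boxes'' is the culprit: if $u$ and $v$ were both active and the visibility realizing the edge $(u,v)$ passed \emph{above} a third active box $w$, then when $u$ is deactivated you lower $h(u)$ below $h(w)$, and the edge $(u,v)$ is immediately lost---it is not ``already frozen'', because changing $h(u)$ changes the layout. Symmetrically, making the newly inserted box strictly taller than every active box forces heights to grow without bound, so some later shrinking is unavoidable in your scheme; hence this is not a detail that a finite case check on the template can absorb. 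A second, related issue is the ``fresh extreme coordinate'': in your four-horizontal/four-vertical template, a horizontal slot has bounded $y$-extent and a vertical slot has bounded $x$-extent, so pushing a new horizontal box to an ever larger $x$ eventually takes its footprint out of alignment with every vertical slot, and being tallest no longer guarantees any line of sight to those actives.

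The paper avoids both problems by never modifying a box once it is placed (the only exception is \emph{raising} the still-active central boxes, never shrinking anything). It fixes in advance, for each group $h$, an entire family of footprints $r_{h,1},\dots,r_{h,n}$ whose sizes and positions are tuned so that inserting $r_{h,i}$ cannot block any earlier visibility inside the designated regions $\alpha_{j,k}$, and it confines the heights of group $h$ to a fixed interval (roughly $[(h-1)n+1,\,hn]$ for $h\le 4$, and $>4n$ for the central groups). The new box gets height $h(v_j)\pm 1$ relative to the just-deactivated box of the \emph{same} group, not ``above all actives''. In short, the invariant you need is not ``the eight actives form a $K_8$'' but the much stronger region-wise invariant I1 of the paper, and maintaining it requires the pre-designed, nested footprint families rather than a single eight-box template with shrink/grow operations.
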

\begin{proof}
We describe an algorithm to compute a \br{} of a graph $G$ with pathwidth $7$. The algorithm is based on the use of eight groups of rectangles, a subset of which will form the footprint of the \br{} of $G$. For graphs with pathwidth $p < 7$, the same algorithm can be applied by considering only $p+1$ groups, arbitrarily chosen.

The eight groups are defined in the plane $z=0$ and have $n$ rectangles each denoted as $r_{h,1}, r_{h,2}, \dots, r_{h,n}$ ($1 \leq h \leq 8$). The groups are placed as shown in Figure~\ref{fi:pathwdith}. The groups $h =5,6,7,8$ will be called \emph{central groups}. A vertex whose footprint is $r_{h,k}$ will be called a \emph{vertex of group $h$} ($1 \leq h \leq 8$). 

Let $v_1, v_2, \dots, v_n$ be the vertices of $G$ in the order given by a normalized path decomposition. We denote by $G_i$ the subgraph of $G$ induced by $\{v_1,v_2,\dots,v_i\}$. 
We create a collection of boxes by adding one box per step; at step $i$ we add a box to represent the next vertex $v_i$ to be activated. We denote the collection of the first $i$ boxes as $\Lambda_i$ and we prove that $\Lambda_i$ satisfies the following invariant (I1): $\Lambda_i$ is a \br{} of $G_i$ such that for any pair of boxes of group $j$ and $k$ ($1 \leq j,k \leq 8$) that represent vertices that are adjacent in $G_i$, there exists a visibility whose projection in the plane $z=0$ is inside the region $\alpha_{j,k}$. The regions 
$\alpha_{j,k}$ are highlighted in Figure~\ref{fi:pathwdith} as dashed regions.  

  The initial eight active vertices $v_1, v_2, \dots, v_8$ are represented by boxes whose footprints are $r_{1,1}, r_{2,1}, \dots, r_{8,1}$, respectively. The heights are set as follows: $h(v_h)=(h-1) \cdot n+1$, for $h=1,2,3,4$, and $h(v_h)=4n+1$ for $h=5,6,7,8$. The initial eight vertices are shown in Figure~\ref{fi:pathwdith} as white rectangles whose heights are shown inside them. $\Lambda_8$ satisfies invariant I1 thanks to the visibilities shown in Figure~\ref{fi:pathwdith}. 

Assume now that $\Lambda_{i-1}$ ($i>8$) satisfies invariant I1 and let $v_j$ be the vertex to be deactivated (for some $j < i$). Assume that $v_j$ belongs to group $h$ ($1 \leq h \leq 8$). Vertex $v_i$ is represented as a box with footprint $r_{h,i}$ and height $h(v_i)=h(v_j)+1$, if $h \in \{1,3,5,6,7,8\}$, or $h(v_i)=h(v_j)-1$, if $h \in \{2,4\}$. If the group of $v_i$ is a central group, we increase by one unit the height of all the active vertices of the other central groups. Notice that the heights of the vertices of group $h$, for $h \leq 4$, are in the range $[(h-1) \cdot n+1,h \cdot n]$, while the heights of the remaining vertices are greater than $4n$.   

We now prove that $\Lambda_i$ satisfies invariant I1 by showing that the addition of $v_i$ does not destroy any existing visibility and that $\Lambda_i(v_i)$ sees all the other active vertices inside the appropriate regions. We have different cases depending on the group $h$ of $v_i$. 

\noindent -- \textbf{$h=1$ or $h=2$.} The box $\Lambda_i(v_i)$ only intersects the regions $\alpha_{h',2}$, with $h' \neq 2$. Thus, the only visibilities that could be destroyed are those inside these regions. The visibilities in the regions $\alpha_{3,2}$, $\alpha_{4,2}$, $\alpha_{5,2}$,  $\alpha_{6,2}$, $\alpha_{7,2}$, and $\alpha_{8,2}$ are not destroyed by the addition of $v_i$ because the boxes representing the vertices of group $2$ are taller than the box representing $v_i$ and so are the boxes of any group $h'$ with $h'>2$. The existing visibilities in the region $\alpha_{1,2}$ are not destroyed because $r_{h,i}$ is short enough (in the $x$-direction) so that the existing boxes of groups $1$ and $2$ can still see each other in region $\alpha_{1,2}$. So, no visibility is destroyed for the vertices of group $2$. The box $\Lambda_i(v_i)$ sees the box of the active vertex of group $1$ or $2$ via a ground visibility in region $\alpha_{1,2}$ and it sees the boxes of all the other active vertices inside the region $\alpha_{h',1}$, with $h' > 2$, above the boxes of group $1$ (which are all shorter than it).

\noindent -- \textbf{$h=3$ or $h=4$.} The proof of this case can be found in Appendix~\ref{ap:pathwidth}. 

\noindent -- \textbf{$h=5$ or $h=6$.}  The box $\Lambda_i(v_i)$ only intersect the regions $\alpha_{h,h'}$, with $h' \in \{5,6,7,8\}$ and $h' \neq h$. However, it does  not intersect any existing visibility inside these regions and therefore the addition of $\Lambda_i(v_i)$ does not destroy any existing visibility. The box $\Lambda_i(v_i)$ sees the active vertices of groups $1$ and $2$ inside $\alpha_{h,k}$ (with $h=5$ or $6$, and $k = 1,2$) and above the boxes of group $1$. The active vertices of groups $3$ and $4$ are seen inside $\alpha_{h,k}$ (with $h=5$ or $6$, and $k = 3,4$) and above the boxes of group $3$. Finally, the active vertices of the central groups are seen inside $\alpha_{h,k}$ (with $h=5$ or $6$, and $k > 4$) and above the boxes of group $h$. Recall that the active vertices of the central groups have been raised to have the same height as $\Lambda_i(v_i)$ (which is larger than the height of any other box in the central groups). 

\noindent -- \textbf{$h=7$ or $h=8$.} The proof of this case can be found in Appendix~\ref{ap:pathwidth}.

The above construction can be done in $O(n)$ time. Since the normalized path decomposition can be computed in $O(n)$ time, the time complexity follows.\qed
\end{proof}

\section{2.5D Grid Box Representations}\label{se:spiderman}

Next we give a tight bound on the edge density of graphs admitting a \gbr{}. The proof, which appears in Appendix~\ref{ap:spiderman}, is based on the fact that a set of aligned (unit square) boxes induces an outerplanar graph. A square grid of boxes gives the bound.

\begin{theorem}\label{th:density}
Every $n$-vertex graph that admits a \gbr{} has at most $4n-6\sqrt{n}$ edges, and this bound is tight.
\end{theorem}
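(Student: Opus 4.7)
The plan splits into three parts: (i) an outerplanarity lemma for row and column visibility subgraphs, (ii) an AM--GM based counting step, and (iii) a matching construction on a $\sqrt{n}\times\sqrt{n}$ grid.

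First, I would show that within any row of the footprint (a maximal set of boxes whose unit-square footprints share the same $y$-row of the integer grid) the subgraph of $x$-parallel visibility edges is outerplanar; the analogous statement for columns follows by symmetry. Ordering the $r$ boxes in a row left-to-right gives a Hamiltonian path, and I claim the remaining visibility edges are non-crossing chords: if $(b_i,b_j)$ and $(b_{i'},b_{j'})$ with $i<i'<j<j'$ were both edges, the visibility witnessing the first---a horizontal segment at some height $z<\min(h(b_i),h(b_j))$ passing over $b_{i'}$---forces $h(b_{i'})<h(b_j)$, and symmetrically the second chord forces $h(b_j)<h(b_{i'})$, a contradiction. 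Consequently each row of size $r$ contributes at most $\max(0,2r-3)$ edges, and each column of size $c$ at most $\max(0,2c-3)$.

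Next, for the counting step, let $R,C$ be the numbers of distinct $y$-rows and $x$-columns used by the footprints, so $\sum_i r_i=\sum_j c_j=n$ and $RC\ge n$. Let $R_2,C_2$ be the numbers of non-singleton rows and columns, and let $n_{\mathrm{NN}}$ denote the number of boxes lying in both a non-singleton row and a non-singleton column; in particular $n_{\mathrm{NN}}\le\min(n,R_2C_2)$. Summing the per-row and per-column outerplanar bounds (boxes in singleton rows/columns simply contribute $0$ in that direction) and simplifying yields $E\le 2n+2n_{\mathrm{NN}}-3(R_2+C_2)$. If $R_2C_2\ge n$, then AM--GM gives $R_2+C_2\ge 2\sqrt{R_2C_2}\ge 2\sqrt{n}$, and substituting $n_{\mathrm{NN}}\le n$ yields $E\le 4n-6\sqrt{n}$. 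Otherwise, use $n_{\mathrm{NN}}\le R_2C_2$; setting $p=R_2C_2<n$ and $s=R_2+C_2\ge 2\sqrt{p}$, the bound reduces to $2p-3s\le 2n-6\sqrt{n}$, which follows from the monotonicity of $t\mapsto 2t-6\sqrt{t}$ on $t\ge 9/4$.

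For tightness with $n=k^2$, place unit-square boxes at all positions $(i,j)$, $1\le i,j\le k$, and assign heights $h(1,1)=3k$; $h(i,1)=k+i$ for $i\ge 2$; $h(1,j)=k+j$ for $j\ge 2$; and $h(i,j)=i+j-1$ for $i,j\ge 2$. In every row the leftmost box is the tallest, while the remaining heights strictly increase with $j$; the leftmost thus sees all $k-1$ other boxes (via a line of sight just above all intervening heights) and the rest are connected only by consecutive-pair visibilities ($k-2$ of them), giving $2k-3$ edges per row. Columns are symmetric, producing a total of $2k(2k-3)=4n-6\sqrt{n}$; since $x$- and $y$-parallel visibilities interact only with boxes in their own row or column, the row and column contributions add without conflict. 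The main obstacle is the counting step, where singleton rows and columns make the naive bound $2r-3$ negative, and one must verify that the resulting slack cannot exceed $4n-6\sqrt{n}$; the $R_2C_2$ case split above is what makes this accounting clean.
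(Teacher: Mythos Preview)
Your approach is essentially the paper's: show each row (and column) induces an outerplanar subgraph via the non-crossing chord argument, sum the $2r-3$ bounds, and exhibit a $\sqrt n\times\sqrt n$ grid for tightness. The paper simply writes $E\le\sum_i(2n_{r_i}-3)+\sum_j(2n_{c_j}-3)=4n-3(R+C)$ and minimises $R+C$ via $RC\ge n$ and AM--GM, without addressing what happens when some $n_{r_i}=1$ makes $2n_{r_i}-3$ negative; you flag precisely this and repair it with the $R_2,C_2,n_{\mathrm{NN}}$ bookkeeping and the $R_2C_2\gtrless n$ case split. So your argument is the same in spirit but strictly more careful on the counting side. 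Your explicit height assignment for tightness differs from the paper's (which describes, less explicitly, heights that ``form a descending sequence between two maxima'' in each row and column), but both realise the full $2k-3$ edges per line. One small residual gap in your write-up: the monotonicity of $t\mapsto 2t-6\sqrt t$ on $t\ge 9/4$ leaves the cases $R_2C_2\in\{0,1,2\}$ unhandled, and indeed the stated bound $4n-6\sqrt n$ is not literally true for very small $n$ (e.g.\ four boxes in a single column give five edges); this is a defect of the theorem as stated rather than of your method, and the paper's proof shares it.
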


In the next theorem we prove that deciding whether a given graph admits a \gbr{} with a given footprint is NP-complete. We call this problem \gbr{}-WITH-GIVEN-FOOTPRINT (2.5GBR-WGF). The reduction is from HAMILTONIAN-PATH-FOR-CUBIC-GRAPHS (HPCG), which is the problem of deciding whether a given cubic graph admits a Hamiltonian path~\cite{akiyama1980np}.

%
%

\begin{theorem}\label{th:hardness}
Deciding whether a given graph $G$ admits a \gbr{} with a given footprint is NP-complete, even if $G$ is a path. 
\end{theorem}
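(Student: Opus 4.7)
The plan is to prove NP-membership by a short certificate argument and NP-hardness by a polynomial reduction from HPCG. For membership, a \gbr{} of $G$ with footprint $F$ is entirely determined by a bijection $\phi:V(G)\to F$ and a height assignment; by an argument analogous to Lemma~\ref{le:heights} it suffices to restrict heights to distinct integers in $[1,|V(G)|]$, giving a polynomial-size certificate. Verification checks, for each edge $(u,v)\in E(G)$, that some axis-aligned line of sight between $\phi(u)$ and $\phi(v)$ is not blocked by the other boxes---a local test on a single row or column of the grid---hence runs in polynomial time.

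For hardness, given an instance $H$ of HPCG on $\{u_1,\dots,u_n\}$, the goal is to build in polynomial time a path $G$ and a footprint $F$ with $|V(G)|=|F|$ such that $H$ admits a Hamiltonian path iff $G$ admits a \gbr{} with footprint $F$. Since a \gbr{} of a path is nothing but an ordering of the cells of $F$ in which every two consecutive cells are mutually visible under a single global height function, the reduction must encode Hamiltonicity of $H$ as a Hamiltonian-path question on the realizable visibility structure of $F$. I would design $F$ as the disjoint union, on well-separated portions of the grid, of a \emph{vertex gadget} for each $u_i$ and an \emph{edge gadget} for each edge of $H$, connected only through designated \emph{port} cells, so that cells from distinct gadgets share a row or column exclusively at ports. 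Each vertex gadget carries three ports, one per incident edge of $u_i$, and is designed so that any contiguous sub-path covering all its cells enters through one port and exits through a second, leaving the third port as a free attachment for the edge gadget of the incident edge of $H$ that is not traversed by the Hamiltonian path; each edge gadget is a short chain of cells placed in a single free row or column joining the two endpoint ports, admitting only two traversals compatible with the rest of the path---a straight pass between its two endpoints (the edge is used) or an out-and-back visit from one endpoint (the edge is unused).

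A Hamiltonian path $u_{\pi(1)},\dots,u_{\pi(n)}$ of $H$ then yields an ordering of $F$ in which the vertex gadgets appear in the order $\pi$, the used edge gadgets are inserted between consecutive vertex gadgets, and each unused edge gadget is absorbed into the vertex gadget at its free endpoint; a staircase-style global height assignment, in the spirit of the construction in Theorem~\ref{th:pathwidth}, realizes all required visibilities simultaneously. Conversely, reading off the order in which the vertex gadgets are visited in any valid \gbr{} of $G$ with footprint $F$ recovers a Hamiltonian path of $H$.

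The main obstacle will be the geometric design of the gadgets. Because heights are a single global function and blockage along a row or column is sensitive to every aligned box, the basic recipe has to be hardened against two kinds of unintended behaviour: stray visibilities that would let the path bypass a vertex gadget through a shortcut across several gadgets, and alternative orderings that enter or leave a gadget through a non-port cell. Ruling these out amounts to proving rigidity of the gadgets, which I would attack using arguments in the flavour of Lemmas~\ref{le:unimaximal-5} and~\ref{le:stabber10}, exploiting how heights and footprint alignment jointly restrict the admissible visibilities.
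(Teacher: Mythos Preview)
Your proposal has the right high-level shape---reduce from HPCG, build a footprint out of vertex and edge gadgets, and argue that a \gbr{} of a path on that footprint is equivalent to a Hamiltonian traversal---but the proof is not there: you explicitly flag the geometric design of the gadgets as ``the main obstacle'' and defer it to unspecified rigidity arguments ``in the flavour of'' Lemmas~\ref{le:unimaximal-5} and~\ref{le:stabber10}. Those lemmas concern stabbed lines in \br{}s of complete graphs and do not transfer to controlling visibilities among unit squares in a \gbr{}; more importantly, you give no concrete layout, no bound on how many squares may share a row or column, and no argument that a single global height function can realise the intended path without simultaneously creating or destroying other visibilities elsewhere. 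Without that, the reduction is only a wish list.

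The paper avoids all of this by a single structural idea you are missing. It starts from an orthogonal grid drawing of the cubic graph $G_H$ in which every edge has exactly one bend and no two vertices share a coordinate, replaces each vertex by a $2\times 2$ block of four unit squares and each bend by one unit square, and scales so that every row and every column of the resulting footprint $F$ contains \emph{at most three} squares. This bound is the whole point: with at most three squares per line, the graph $F^*$ of all aligned pairs is exactly the set of edges that can ever be realised, and any Hamiltonian path in $F^*$ can be turned into a \gbr{} by a trivial local height rule (drop by one when stepping from an outer square to the middle square of a line, raise by one when stepping back). No staircases, no rigidity analysis, no stray-visibility case study is needed. The remaining work is a purely combinatorial equivalence between Hamiltonian paths in $G_H$ and in $F^*$, handled by small local rewrites inside each $2\times 2$ vertex block. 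Your plan, by contrast, leaves the hard geometric part open.
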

\begin{sketch}
We first prove that 2.5GBR-WGF is in NP. A candidate solution consists of a mapping of the vertices of $G$ to the squares of the given footprint and a choice of the heights of the boxes. By Lemma~\ref{le:heights} we can assign to each box an integer height in the set $\{1,2,\dots,n\}$. Thus the size of a candidate solution is polynomial in the size of the input graph. Given a candidate solution, we can test in polynomial time whether all edges of $G$ are realized as visibilities. Thus, the problem is in NP. 

We now describe a reduction from the HPCG problem. Let $G_H$ be an instance of the HPCG problem, i.e. a cubic graph, with $n_H$ vertices and $m_H$ edges. We compute an orthogonal grid drawing $\Gamma_H$ of $G_H$ such that every edge has exactly one bend and no two vertices share the same $x$- or $y$-coordinate. Such a drawing always exists and can be computed in polynomial time with the algorithm by Bruckdorfer et al.~\cite{JGAA-316}. We now use $\Gamma_H$ as a trace to construct an instance $\langle G, F \rangle$ of the 2.5GBR-WGF problem, where $G$ is a path and $F$ a footprint, i.e, a set of squares. $G$ is a path with $4n_H+m_H$ vertices and therefore $F$ will contain $4n_H+m_H$ squares. The footprint $F$ is constructed as follows. $\Gamma_H$ is scaled up by a factor of four. In this way, every two vertices/bends are separated by at least four grid units. Each vertex $v$ of $\Gamma_H$ is replaced by a set $S(v)$ of four unit squares. In particular if vertex $v$ has coordinates $(4x,4y)$ in $\Gamma_H$, then it is replaced by the following four unit squares: $S_1(v)$ whose bottom-right corner has coordinates $(4x,4y)$, $S_2(v)$ whose bottom-right corner has coordinates $(4x+2,4y)$, $S_3(v)$ whose bottom-right corner has coordinates $(4x,4y-2)$, and $S_4(v)$ whose bottom-right corner has coordinates $(4x+2,4y-2)$. We associate with each edge $e$ incident to a vertex $v$, one of the four squares in $S(v)$. If $e$ enters $v$ from West, North, South, or East, the square associated with $e$ is $S_1(v)$, $S_2(v)$, $S_3(v)$, or $S_4(v)$, respectively. Let $(u,v)$ be an edge of $\Gamma_H$ and let $S_i(u)$ and $S_j(v)$ ($1 \leq i,j \leq 4$) be the squares associated with $(u,v)$. The bend of $e=(u,v)$ is replaced by a unit square $S_{e}$ horizontally/vertically aligned with $S_i(u)$ and $S_j(v)$. The set of squares replacing the vertices of $\Gamma_H$, which will be called \emph{vertex squares} in the following, together with the set of squares replacing the bends, which will be called \emph{edge squares} in the following, form the footprint $F$.  Figure~\ref{fi:footprint} shows an orthogonal drawing of a cubic graph and the corresponding footprint $F$. Observe that the footprint $F$ is such that any two squares are separated by at least one unit and in each row/column there are at most three squares. 
\begin{figure}[t]
\centering
\begin{subfigure}[b]{.3\linewidth}
\centering
\includegraphics[width=\columnwidth,page=1]{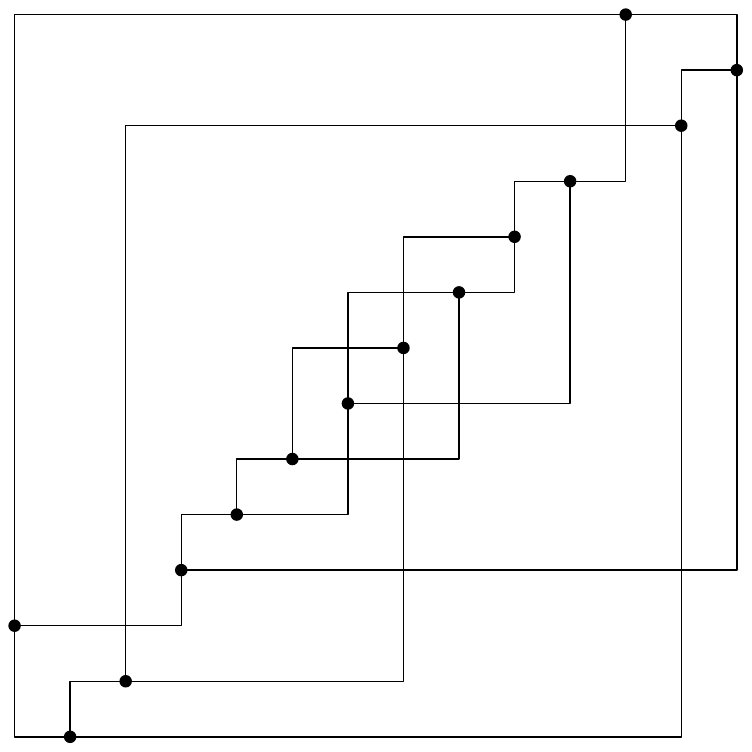}
\caption{}\label{fi:footprint-1}
\end{subfigure}%
\hfil
\begin{subfigure}[b]{.3\linewidth}
\centering
\includegraphics[width=\columnwidth,page=2]{footprint}
\caption{}\label{fi:footprint-2}
\end{subfigure}
\hfil
\begin{subfigure}[b]{.3\linewidth}
\centering
\includegraphics[width=\columnwidth,page=3]{footprint}
\caption{}\label{fi:footprint-3}
\end{subfigure}
\caption{(a) An orthogonal drawing of a cubic graph. (b) Construction of the footprint. Black (gray) squares are vertex (edge) squares. (c) The constructed footprint.}\label{fi:footprint}
\end{figure}
Let $F^*$ be a graph with a vertex for each square in $F$ and an edge between two squares if and only if the two squares are horizontally or vertically aligned. It can be proved that $G_H$ admits a Hamiltonian path if and only if $F^*$ contains a Hamiltonian path, see Appendix~\ref{ap:spiderman}.

Consider the instance $\langle G, F \rangle$ of the 2.5GBR-WGF problem, where $G$ is a path. We prove that $G$ admits a \gbr{} with footprint $F$ if and only if $F^*$ admits a Hamiltonian path. Every graph that can be represented by a \gbr{} with footprint $F$ is a spanning subgraph of $F^*$ (because $F^*$ has all possible edges that can be realized as visibilities in a \gbr{} with footprint $F$). Thus, if $G$ admits a \gbr{} with footprint $F$, then $G$ is a Hamiltonian path of $F^*$ (recall that $G$ is a path). 
Suppose now that $F^*$ has a Hamiltonian path $H^*$. We show that we can choose the heights of the squares in $F$ so that the resulting boxes form a \gbr{} of $G$. Recall that in each row/column of $F$ there are at most three squares. If an edge connects two squares that are consecutive along a row or column, then any choice of the heights is fine. If an edge connects the first and the last square of a row/column, then the heights of these two squares must be larger than the height of the square in the middle. We assign the heights to one square per step, in the order in which they appear along $H^*$. We assign to the first square a height equal to the number of squares (i.e., $4n_H+m_H$). Let $h$ be the height assigned to the current square $S$ and let $S'$ be the next square along $H^*$. If $S$ and $S'$ are consecutive along a row/column then the height assigned to $S'$ is $h$. If $S$ and $S'$ are the first and the last square of a row/column then the height assigned to $S'$ is $h$. If $S$ is the first/last square of a row/column and $S'$ is the middle square of the same row/column, then the height assigned to $S'$ is $h-1$. If $S$ is the middle square of a row/column and $S'$ is the first/last square of the same row/column, then the height assigned to $S'$ is $h+1$. It is easy to see that all heights are positive and that if an edge connects the first and the last square of a row/column, then the heights of these two squares are greater than the height of the square in the middle. This concludes the proof that $G$ admits a \gbr{} with footprint $F$ if and only if $F^*$ admits a Hamiltonian path. Since $F^*$ has a Hamiltonian path if and only if $G_H$ has a Hamiltonian path, $G$ admits a \gbr{} with footprint $F$ if and only if $G_H$ has a Hamiltonian path, which implies that the 2.5GBR-WGF problem is NP-hard.\qed
\end{sketch}


\section{Open Problems}
\label{se:open}

There are several possible directions for further study of \br{}s. Among them:
$(i)$ Study the complexity of deciding if a given graph admits a \br{}. We remark that deciding if a graph admits an RVR is NP-hard.
$(ii)$ Investigate other classes of graphs that admit a \br{}. For example, do $1$-planar graphs or partial $5$-trees always admit a \br{}? We remark that there are both $1$-planar graphs and partial $5$-trees not admitting an RVR.
$(iii)$ Study the \br{}s under the strong visibility model. For example, which bipartite graphs admit a strong \br{}?


\bibliography{visrepshort}

\begin{thebibliography}{10}

\bibitem{6777827}
M.~E. Ahmed, A.~B. Yusuf, and M.~Z.~H. Polin.
\newblock Bar 1-visibility representation of optimal 1-planar graph.
\newblock In {\em Electrical Inf. and Comm. Technology (EICT), 2013}, pages
  1--5, 2014.

\bibitem{akiyama1980np}
T.~Akiyama, T.~Nishizeki, and N.~Saito.
\newblock {NP}-completeness of the hamiltonian cycle problem for bipartite
  graphs.
\newblock {\em J. of Inf. Processing}, 3(2):73--76, 1980.

\bibitem{BernhartKainen79}
F.~Bernhart and P.~C. Kainen.
\newblock The book thickness of a graph.
\newblock {\em J. of Combinatorial Theory, Series B}, 27:320--331, 1979.

\bibitem{SoCG}
Therese Biedl, Giuseppe Liotta, and Fabrizio Montecchiani.
\newblock On visibility representations of non-planar graphs.
\newblock In S\'andor Fekete and Anna Lubiw, editors, {\em {SoCG} 2016},
  volume~51 of {\em LIPIcs}, pages 19:1--19:16. Schloss Dagstuhl -
  Leibniz-Zentrum fuer Informatik, 2016.

\bibitem{b-ltaftdst-96}
H.~L. Bodlaender.
\newblock A linear-time algorithm for finding tree-decompositions of small
  treewidth.
\newblock {\em SIAM J. Computing}, 25(6):1305--1317, 1996.

\bibitem{Bodlaender1996358}
H.~L. Bodlaender and T.~Kloks.
\newblock Efficient and constructive algorithms for the pathwidth and treewidth
  of graphs.
\newblock {\em J. of Algorithms}, 21(2):358--402, 1996.

\bibitem{DBLP:conf/gd/BoseDHS96}
P.~Bose, A.~M. Dean, J.~P. Hutchinson, and T.~C. Shermer.
\newblock On rectangle visibility graphs.
\newblock In S.~North, editor, {\em Proc. Symp. on Graph Drawing, {GD} '96},
  volume 1190 of {\em LNCS}, pages 25--44, 1996.

\bibitem{JGAA-6}
P.~Bose, H.~Everett, S.~P. Fekete, M.~E. Houle, A.~Lubiw, H.~Meijer,
  K.~Romanik, G.~Rote, T.~C. Shermer, S.~Whitesides, and C.~Zelle.
\newblock A visibility representation for graphs in three dimensions.
\newblock {\em J. Graph Algorithms and Applications}, 2(3):1--16, 1998.

\bibitem{JGAA-330}
F.~{Brandenburg}.
\newblock 1-visibility representations of 1-planar graphs.
\newblock {\em J. of Graph Algorithms and Applications}, 18(3):421--438, 2014.

\bibitem{JGAA-316}
T.~{Bruckdorfer}, {M.} {Kaufmann}, and {F.} {Montecchiani}.
\newblock 1-bend orthogonal partial edge drawing.
\newblock {\em J. of Graph Algorithms and Applications}, 18(1):111--131, 2014.

\bibitem{Carmi2015251}
Paz Carmi, Eran Friedman, and Matthew~J. Katz.
\newblock Spiderman graph: Visibility in urban regions.
\newblock {\em Computational Geometry}, 48(3):251 -- 259, 2015.

\bibitem{DBLP:journals/combinatorics/DeanEHP08}
A.~M. Dean, J.~A. Ellis{-}Monaghan, S.~Hamilton, and G.~Pangborn.
\newblock Unit rectangle visibility graphs.
\newblock {\em Electr. J. Comb.}, 15(1), 2008.

\bibitem{JGAA-136}
A.~M. {Dean}, {W.} {Evans}, {E.} {Gethner}, {J.}~D. {Laison}, {Md. A.}
  {Safari}, and {W.}~T. {Trotter}.
\newblock Bar $k$-visibility graphs.
\newblock {\em J. of Graph Algorithms and Applications}, 11(1):45--59, 2007.

\bibitem{DBLP:journals/dam/DeanH97}
A.~M. Dean and J.~P. Hutchinson.
\newblock Rectangle-visibility representations of bipartite graphs.
\newblock {\em Disc. App. Math.}, 75(1):9--25, 1997.

\bibitem{JGAA-11}
A.~M. {Dean} and J.~P. {Hutchinson}.
\newblock Rectangle-visibility layouts of unions and products of trees.
\newblock {\em J. of Graph Algorithms and Applications}, 2(8):1--21, 1998.

\bibitem{dv-ubvg-03}
A.~M. Dean and N.~Veytsel.
\newblock Unit bar-visibility graphs.
\newblock {\em Congressus Numerantium}, 160:161--175, 2003.

\bibitem{ddelmmw-opvreg-16}
E.~{Di Giacomo}, W.~Didimo, W.~S. Evans, G.~Liotta, H.~Meijer, F.~Montecchiani,
  and S.~K. Wismath.
\newblock Ortho-polygon visibility representations of embedded graphs.
\newblock {\em CoRR}, abs/1604.08797, 2016.

\bibitem{Duchet1983319}
P.~Duchet, Y.~Hamidoune, M.~Las Vergnas, and H.~Meyniel.
\newblock Representing a planar graph by vertical lines joining different
  levels.
\newblock {\em Discrete Math.}, 46(3):319--321, 1983.

\bibitem{JGAA-343}
{W.} {Evans}, {M.} {Kaufmann}, {W.} {Lenhart}, {T.} {Mchedlidze}, and {S.}
  {Wismath}.
\newblock Bar 1-visibility graphs and their relation to other nearly planar
  graphs.
\newblock {\em J. of Graph Algorithms and Applications}, 18(5):721--739, 2014.

\bibitem{Fekete1995}
S.~P. Fekete, M.~E. Houle, and S.~Whitesides.
\newblock New results on a visibility representation of graphs in {3D}.
\newblock In F.~Brandenburg, editor, {\em Graph Drawing}, volume 1027 of {\em
  LNCS}, pages 234--241. Springer Berlin Heidelberg, 1995.

\bibitem{DBLP:journals/ijcga/FeketeM99}
S.~P. Fekete and H.~Meijer.
\newblock Rectangle and box visibility graphs in {3D}.
\newblock {\em Int. J. Comput. Geometry Appl.}, 9(1):1--28, 1999.

\bibitem{JGAA-157}
{S.} {Felsner} and {M.} {Massow}.
\newblock Parameters of bar $k$-visibility graphs.
\newblock {\em J. of Graph Algorithms and Applications}, 12(1):5--27, 2008.

\bibitem{GareyJS76}
M.~R. Garey, D.~S. Johnson, and H.~C. So.
\newblock An application of graph coloring to printed circuit testing.
\newblock {\em IEEE Trans. on Circuits and Systems}, CAS-23(10):591--599, 1976.

\bibitem{Gupta2005242}
A.~Gupta, N.~Nishimura, A.~Proskurowski, and P.~Ragde.
\newblock Embeddings of $k$-connected graphs of pathwidth $k$.
\newblock {\em Disc. Applied Mathematics}, 145(2):242--265, 2005.

\bibitem{Hutchinson1999161}
J.~P. Hutchinson, T.~Shermer, and A.~Vince.
\newblock On representations of some thickness-two graphs.
\newblock {\em Comp. Geometry}, 13(3):161--171, 1999.

\bibitem{DBLP:journals/ipl/KantLTT97}
G.~Kant, G.~Liotta, R.~Tamassia, and I.~G. Tollis.
\newblock Area requirement of visibility representations of trees.
\newblock {\em Inf. Process. Lett.}, 62(2):81--88, 1997.

\bibitem{Kleitman2001}
J.~D. Kleitman, A.~Gy{\'a}rf{\'a}s, and G.~T{\'o}th.
\newblock Convex sets in the plane with three of every four meeting.
\newblock {\em Combinatorica}, 21(2):221--232, 2001.

\bibitem{DBLP:journals/ipl/LiottaM16}
G.~Liotta and F.~Montecchiani.
\newblock L-visibility drawings of {IC}-planar graphs.
\newblock {\em Inf. Process. Lett.}, 116(3):217--222, 2016.

\bibitem{ov-grild-78}
R.~H. J.~M. Otten and J.~G.~Van Wijk.
\newblock Graph representations in interactive layout design.
\newblock In {\em {IEEE ISCSS}}, pages 91--918. IEEE, 1978.

\bibitem{DBLP:journals/dcg/RosenstiehlT86}
P.~Rosenstiehl and R.~E. Tarjan.
\newblock Rectilinear planar layouts and bipolar orientations of planar graphs.
\newblock {\em Discr. {\&} Comput. Geom.}, 1:343--353, 1986.

\bibitem{Shermer96}
T.~C. Shermer.
\newblock On rectangle visibility graphs {III}. external visibility and
  complexity.
\newblock In {\em Cdn. Conf. on Comp. Geometry}, pages 234--239, 1996.

\bibitem{Stola2009}
J.~{\v{S}}tola.
\newblock Unimaximal sequences of pairs in rectangle visibility drawing.
\newblock In I.~G. Tollis and M.~Patrignani, editors, {\em Graph Drawing 2008},
  volume 5417 of {\em LNCS}, pages 61--66. Springer Berlin Heidelberg, 2009.

\bibitem{DBLP:conf/stacs/StreinuW03}
I.~Streinu and S.~Whitesides.
\newblock Rectangle visibility graphs: Characterization, construction, and
  compaction.
\newblock In H.~Alt and M.~Habib, editors, {\em {STACS} 2003}, volume 2607 of
  {\em LNCS}, pages 26--37. Springer, 2003.

\bibitem{Sultana2014}
S.~Sultana, Md.~S. Rahman, A.~Roy, and S.~Tairin.
\newblock Bar 1-visibility drawings of 1-planar graphs.
\newblock In P.~Gupta and C.~Zaroliagis, editors, {\em Proc. Applied
  Algorithms: First Int. Conf. ICAA 2014}, pages 62--76. Springer Int.
  Publishing, 2014.

\bibitem{TamassiaTollis86}
R.~Tamassia and I.~G. Tollis.
\newblock A unified approach to visibility representations of planar graphs.
\newblock {\em Discr. \& Comput. Geom.}, 1(1):321--341, 1986.

\bibitem{tt-rgc-91}
R.~Tamassia and {I. G.} Tollis.
\newblock Representations of graphs on a cylinder.
\newblock {\em SIAM J. Disc. Mathematics}, 4(1):139--149, 1991.

\bibitem{t-prg-84}
C.~Thomassen.
\newblock Plane representations of graphs.
\newblock In {\em Progress in Graph Theory}, pages 43--69. AP, 1984.

\bibitem{DBLP:conf/compgeom/Wismath85}
S.~K. Wismath.
\newblock Characterizing bar line-of-sight graphs.
\newblock In {\em Proc. 1st Symp. on Comp. Geometry}, pages 147--152, 1985.

\end{thebibliography}
\bibliographystyle{plain}

\newpage

\appendix
\section*{Appendix}\label{se:appendix}

\section{Additional Material for Section~\ref{se:complete}}\label{ap:complete}


\paragraph{Proof of Lemma~\ref{le:heights}.} 
By hypothesis $G$ admits a \br{} $\Gamma'$. If every box of $\Gamma'$ has a distinct integer height in the range $[1,n]$, the statement is true. If not, we can change the heights so to achieve this condition. Namely, denote by $b_1,b_2,\dots,b_n$ the boxes of $\Gamma'$ in non-decreasing order of height; we change the height of $b_i$ to be $i$ (for $i=1,2,\dots,n$). Let $\Gamma$ be the resulting representation and denote by $h'(b_i)$ the height of $b_i$ in $\Gamma'$ and by $h(b_i)$ the height of $b_i$ in $\Gamma$. For any two boxes $b_i$ and $b_j$, $h(b_i) < h(b_j)$ if and only if $h'(b_i) \leq h'(b_j)$, which means that no visibility has been destroyed by our change of the heights (while some new visibility may have been created).\qed

\paragraph{Proof of Lemma~\ref{le:cross}.} 
For a given arrangement $\R$, choose $\ell_v$ and $\ell_h$ to be a
vertical and horizontal line whose union intersects the maximum number
of rectangles in $\R$.  Suppose, for the sake of contradiction, that
some rectangle $a \in \R$ is not intersected by $\ell_v \cup \ell_h$.
Choose $\ell_v$ and $\ell_h$ so that they are closest to $a$ without
changing the set of rectangles intersected by their union.
Assume w.l.o.g. that $a$ lies in the positive quadrant of $\ell_v \cup
\ell_h$.
Let $b$ be a rectangle that prevents $\ell_h$ from moving closer to
$a$, that is, $b \cap \ell_h \neq \emptyset$ but $b \cap (\ell_v \cup
\ell'_h) = \emptyset$, where $\ell'_h$ is $\ell_h$ translated in the $+y$ direction by any arbitrarily
small positive amount. Let $c$ be a rectangle that prevents $\ell_v$ from moving closer to
$A$, that is, $c \cap \ell_v \neq \emptyset$ but $c \cap (\ell'_v \cup
\ell_h) = \emptyset$, where $\ell'_v$ is $\ell_v$ translated in the $+x$ direction by any arbitrarily
small positive amount.
The line $\ell_h$ separates $a$ and $b$, so $y(a) \cap y(b) =
\emptyset$, which implies $x(a) \cap x(b) \neq \emptyset$.
Similarly, using line $\ell_v$, $y(a) \cap y(c) \neq \emptyset$.

By the conditions of the lemma, either $y(b) \cap y(c) \neq \emptyset$
or $x(b) \cap x(c) \neq \emptyset$.
Suppose that $y(b) \cap y(c) \neq \emptyset$.
Since $y(c)$ has non-empty intersection with both $y(a)$ and $y(b)$,
any horizontal line that separates $a$ and $b$ must intersect $c$.
Thus $\ell_h$ intersects $c$ and $c \cap (\ell'_v \cup \ell_h) \neq
\emptyset$ for all vertical lines $\ell'_v$; a contradiction with the fact that $c$ prevents $\ell_v$ from moving closer to $a$.
We obtain a similar contradiction if $x(b) \cap x(c) \neq \emptyset$.\qed

\begin{lemma}\label{le:k19}
$K_{19}$ admits a \br{}.
\end{lemma}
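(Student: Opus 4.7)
The plan is to prove the lemma by exhibiting the explicit configuration already displayed in Figure~\ref{fig:K19}. I would begin the proof by formally describing the construction: listing, for each of the nineteen vertices of $K_{19}$, the coordinates of the axis-aligned rectangle that serves as its footprint in the plane $z=0$, together with the integer height of the corresponding box. Both pieces of information can be read off Figure~\ref{fig:K19}, where the footprint is drawn as a 2D layout in the ground plane and each box is labelled with its height.

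Next, following the hint in the excerpt, I would partition $V(K_{19})$ into the five subsets $V_1,\dots,V_5$ bounded by the five thick rectangles in the figure and reduce the verification to the following claim: for each $i\in\{1,\dots,5\}$, every box representing a vertex in $V_i$ sees every other box of the representation. Because the five subsets cover $V(K_{19})$, this suffices to witness all $\binom{19}{2}$ edges of $K_{19}$ as axis-parallel visibilities, and hence to conclude that the construction is a \br{} of $K_{19}$.

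To establish this claim for a fixed $V_i$, I would proceed box by box. Given $u\in V_i$ and any other vertex $v$, I would (a) identify an axis-parallel strip connecting the footprint of $\Lambda(u)$ to that of $\Lambda(v)$, using the placement of the thick rectangles to argue that such a strip always exists; and (b) list the footprints of the boxes that the strip crosses and compare their heights with $h(u)$ and $h(v)$. The heights in Figure~\ref{fig:K19} are chosen precisely so that $\min\{h(u),h(v)\}$ strictly exceeds the height of every box whose footprint the chosen strip crosses, which is exactly the condition guaranteeing an $\epsilon$-visibility between the tops of $\Lambda(u)$ and $\Lambda(v)$ above the obstructing boxes. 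For pairs $u,v$ inside the same subset $V_i$, this inequality typically holds by ground visibility, since the thick rectangle containing $V_i$ confines the relevant footprints to a small region with a simple staircase/tower pattern.

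The main obstacle is purely organisational: even with the five-subset decomposition, several boxes must see partners across the whole drawing while passing above a handful of other boxes, and the correctness of each such visibility depends on a small integer inequality between heights. I would therefore present the verification as a short table with one row per subset $V_i$, each row summarising (i) the intra-$V_i$ visibilities (handled by ground visibility inside the thick rectangle), and (ii) the inter-subset visibilities grouped by the axis along which $V_i$ faces the other subset. Once the table is drawn, every entry reduces to a single height comparison that is immediate from the labels in Figure~\ref{fig:K19}, and the lemma follows.
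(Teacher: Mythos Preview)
Your plan is correct and follows essentially the same approach as the paper: partition the nineteen boxes into the five groups indicated by the thick rectangles, observe that intra-group visibilities in four of the groups are ground visibilities, and verify the remaining visibilities by comparing heights along the obvious axis-parallel strips. The paper organises the inter-group checks incrementally (adding $R$, $L$, $B$, $C$ one at a time and repeatedly invoking the ``staircase layout'' of the previously added group), and you should be aware that the central group $C$ is the one exception to intra-group ground visibility---$c_1,c_3$ and $c_2,c_4$ must see each other \emph{above} $c_5$---but your tabular verification will catch this once you consult the heights in Figure~\ref{fig:K19}.
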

\begin{proof}
Let $\Lambda$ be the box collection shown in Figure~\ref{fig:K19}, where the footprint $\Lambda_0$ is depicted by a 2D drawing, while the heights of boxes are indicated by integer labels. We prove the statement by showing that $\Lambda$ is a \br{} of $K_{19}$. We preliminarily observe that $\Lambda_0$ satisfies the necessary condition \nc{}. The boxes of $\Lambda$ are partitioned into five subsets $T, R, L, B, C$ such that $|T| = |R| = 4$, $|L| = |B| = 3$ and $|C| = 5$; this partitioning is shown by the five rectangles with thick sides. For $S \in \{ T,R,L,B \}$, it is easy to see that (for any assignment of heights to the boxes in $\Lambda$) $\Lambda(S)$ is a \br{} of $K_{|S|}$ in $\Lambda$, since any two boxes of $\Lambda(S)$ are ground visible in $\Lambda_0$.  Hence, except for $C$, the intra-partition visibilities are ensured regardless of the heights of the boxes. We now show that even the inter-partition visibilities and the intra-partition visibilities of $\Lambda(C)$ exist if the heights of boxes are chosen as shown in Figure~\ref{fig:K19}. We follow an incremental strategy in which, at each step, we add one of $R, L ,B, C$, in order, to the current set of vertices, which is initialized to $T$.

\emph{Step~1: addition of $R$}. Box $\Lambda(t_i)$ ($1 \leq i < |T|$) may obstruct the visibility between a box in $\Lambda(R)$ and a box $\Lambda(t_{i'})$, with $i' > i$, in $\Lambda(T)$.
This obstruction can be avoided, however, if (\emph{i}) $\Lambda(T)$ has a staircase layout, i.e. the height of its boxes increases as $\Lambda(R)$ gets farther, and (\emph{ii}) every box in $\Lambda(R)$ is not lower than any box in $\Lambda(T)$. Therefore, $\Lambda(T \cup R)$ is a \br{} of $K_8$ if $h(t_i) = i$ ($1 \leq i \leq |T|$) and, for each $1 \leq j \leq |R|$, $h(r_j) \geq h_{max}(T)$, where $h_{max}(T)$ is the maximum box height in $\Lambda(T)$.

\emph{Step~2: addition of $L$}. Consider the subset $S' = L \cup \{r_1\}$. As for the previous partite sets, $\Lambda(S')$ is a \br{} of $K_{|S'|}$ in $\Lambda$ for any assignment of heights.
However, box $\Lambda(r_i)$ ($1 \leq i < |R|$) may prevent the (inter-partition) visibility between a box in $\Lambda(L)$ and a box $\Lambda(r_{i'})$, with $i' > i$, in $\Lambda(R)$.
As before, these visibilities can be ensured if $\Lambda(R)$ has a staircase layout and every box in $\Lambda(L)$ is not lower than any box in $\Lambda(R)$.
In particular, a possible assignment of heights is the following: $h(r_i) = h_{max}(T) + i - 1$ ($1 \leq i \leq |R|$) and, for each $1 \leq j \leq |L|$, $h(l_j) \geq h_{max}(R)$.
This assignment also implies the visibility between every box in $\Lambda(L)$ and every box in $\Lambda(T)$.

\emph{Step~3: addition of $B$}. Consider now the subset $S'' = B \cup \{l_1\}$.
As before, $\Lambda(S'')$ is a \br{} of $K_{|S''|}$ in $\Lambda$, independently from the choice of the heights of boxes.
Furthermore, the inter-partition visibilities between $\Lambda(B)$ and $\Lambda(L)$ can be satisfied if $h(l_i) = h_{max}(R) + i - 1$ ($1 \leq i \leq |L|$) and, for each $1 \leq j \leq |B|$, $h(b_j) \geq h_{max}(L)$.
With this assignment of heights, the inter-partition visibilities between $\Lambda(B)$ and $\Lambda(T)$ and between $\Lambda(B)$ and $\Lambda(R)$ are ensured. 

\emph{Step~4: addition of $C$}. According to Fig.~\ref{fig:K19}, every box $\Lambda(c_i)$ ($1 \leq i \leq 5$) can see every other box $\Lambda(v)$ with $v \in T \cup R \cup L \cup B$, provided that $h(c_i) \geq h(v)$.
Indeed, $\Lambda(c_i)$ is ground visible to any box in $\{\Lambda(t_1), \Lambda(r_1), \Lambda(b_1), \Lambda(l_1)\}$ and $\Lambda(T)$, $\Lambda(R)$, $\Lambda(L)$, $\Lambda(B)$ have a staircase layout with increasing box height as $\Lambda(C)$ gets farther.
Therefore, if $h(c_i) \geq h_{max}(B)$ ($1 \leq i \leq 5$), then all the inter-partition visibilities in $\Lambda$ are satisfied.
It remains to consider the intra-partition visibilities in $\Lambda(C)$.
In this regard, the only visibility obstructions can be caused by $\Lambda(c_5)$, which may prevent the visibility between $\Lambda(c_1)$ and $\Lambda(c_3)$ ($\Lambda(c_2)$ and $\Lambda(c_3)$, respectively) if $h(c_1)$ and $h(c_3)$ ($h(c_2)$ and $h(c_4)$, respectively) are not both strictly greater than $h(c_5)$.
Therefore, by choosing $h(c_5) = h_{max}(B)$ and $h(c_i) = h(c_5) + 1$ ($1 \leq i \leq 4$), the intra-partition visibilities in $\Lambda(C)$ are satisfied, from which it follows that $\Lambda$ is a \br{} of $K_{19}$.\qed
\end{proof}

\section{Additional Material for Section~\ref{se:pathwidth}}\label{ap:pathwidth}

A \emph{path decomposition} $P$ of a graph $G=(V,E)$ is a sequence $P_1, \dots, P_k$ of subsets of $V$, called \emph{bags}, such that the following three properties hold:
\begin{itemize}
\item For every vertex $u$ of $G$, there is a bag $P_i$ (with $1 \leq i \leq k$) such that $u \in P_i$;
\item For every edge $(u,v)$ of $G$, there is a bag $P_i$ (with $1 \leq i \leq k$) such that $u,v \in P_i$;
\item For every vertex $u$, there exists two indices $1 \leq j \leq h \leq k$, such that $u$ is contained in all bags $P_i$ such that $j \leq i \leq h$ and in no other bag.
\end{itemize}

Let $P_i$ be the bag of $P$ with maximum size. The \emph{width} of the path decomposition $P$ is $|P_i|-1$. The \emph{pathwidth} of a graph $G$ is the minimum width of any path decomposition of $G$. 

A path decomposition $P=P_1,\dots,P_k$ of a graph $G$ of pathwidth $p$ is normalized if $|P_i|=p+1$ for $i$ odd, $|P_i|=p$ for $i$ even, and $P_{i-1} \cap P_{i+1}=P_i$ for $i$ even.

For a fixed $p$, path decomposition of graphs with pathwidth $p$ can be found in linear time~\cite{Bodlaender1996358,b-ltaftdst-96}. Given a path decomposition, a normalized path decomposition of the same width can be found in linear time~\cite{Gupta2005242}.

\paragraph{Missing cases of the proof of Theorem~\ref{th:pathwidth}.} 
To complete the proof of Theorem~\ref{th:pathwidth}, we need to prove the following cases.

\begin{description}

\item[$h=3$ or $h=4$.]  The box $\Lambda_i(v_i)$ only intersects the regions $\alpha_{4,k'}$, with $k' <4$, and $\alpha_{k'',4}$, with $k'' > 4$. Thus, the only visibilities that could be destroyed are those inside these regions. The visibilities inside $\alpha_{4,1}$ and $\alpha_{4,2}$ are not destroyed because $r_{h,i}$ is placed so that the existing boxes of groups $4$ can still see (to the left of $r_{h,i}$) the boxes of group $1$ and $2$ inside $\alpha_{4,1}$ and $\alpha_{4,2}$. The existing visibilities between boxes of group $4$ and the boxes of group $3$ are not destroyed because $r_{h,i}$ is short enough (in the $x$-direction) so that the existing boxes of groups $3$ and $4$ can still see each other (to the right of $r_{h,i}$) inside $\alpha_{4,3}$. The visibilities between the vertices of group $4$ and vertices of group $h'$, with $h'>4$, are not destroyed because the boxes representing the vertices of group $4$ are taller than $\Lambda_i(v_i)$ and so are the boxes of any group $h'$ with $h'>4$. So, no visibility is destroyed for the vertices of group $4$. The box $\Lambda_i(v_i)$ sees the active vertex of group $1$ with a visibility that is inside $\alpha_{3,1}$ or $\alpha_{4,1}$ and above the boxes of group $1$ corresponding to non-active vertices (these boxes are shorter than the box of the active vertex of group one).    
Similarly, $\Lambda_i(v_i)$ sees the active vertex of group $2$ with a visibility that is inside $\alpha_{3,2}$ or $\alpha_{4,2}$ and above the boxes of group $1$ (including the active one). The box $\Lambda_i(v_i)$ sees the active vertex of group $3$ or $4$ via a ground visibility inside $\alpha_{4,3}$ and it sees the boxes of all the other active vertices inside $\alpha_{h,k}$ (with $h=3$ or $4$, and $k > 4$) above the boxes of group $3$ (which are all shorter than it).

\item[$h=7$ or $h=8$.]  The box $\Lambda_i(v_i)$ only intersects the regions $\alpha_{7,5}$, $\alpha_{8,5}$, and $\alpha_{8,7}$. Thus, the only visibilities that could be destroyed are those inside these regions. The visibilities between the vertices of groups $5$ and $7$ are not destroyed because $r_{h,i}$ is placed so that the existing boxes of groups $7$ can still see the boxes of group $5$ inside $\alpha_{7,5}$ above $r_{h,i}$ (in the $y$-direction). Similarly, the visibilities between the vertices of group $5$ and the vertices of group $8$ are not destroyed because the existing boxes of groups $8$ can still see the boxes of group $5$ inside $\alpha_{8,5}$ and below $r_{h,i}$ (in the $y$-direction). The visibilities between the vertices of group $7$ and the vertices of group $8$ are not destroyed because the existing boxes of groups $8$ can still see the boxes of group $7$ inside $\alpha_{8,7}$ and above $r_{h,i}$ (in the $y$-direction), if $h=8$, or below $r_{h,i}$ (in the $y$-direction), if $h=7$. The proof that $v_i$ sees all the other active vertices is equal to the one in the case $h=5$ or $h=6$.\qed
\end{description}

\section{Additional Material for Section~\ref{se:spiderman}}\label{ap:spiderman}

\paragraph{Proof of Theorem~\ref{th:density}.} 
Consider a \gbr{} $\Gamma$ of an $n$-vertex graph $G$ and let $\Gamma_0$ be the footprint of $\Gamma$. $\Gamma_0$ consists of a set of unit squares that are aligned along the columns and the rows of an integer grid. Let $B_1, B_2, \dots, B_k$ be a set of boxes whose footprints share the same $x$-extent (i.e., are in the same column) or $y$-extent (i.e., are in the same row). Two boxes $B_i$ and $B_j$, with $i+1<j$  can have a visibility only if they are both higher than any other box $B_k$ with $ i < k <j$. Thus, there cannot be four boxes $B_i, B_j, B_k, B_h$, with $i < j < k < h$ and such that there is a visibility between $B_i$ and $B_k$ and a visibility between $B_j$ and $B_h$. If these two visibilities existed then $B_j$ should be taller than $B_k$ (in order to see $B_h$) and $B_k$ should be taller than $B_j$ (in order to see $B_i$). It follows that the subgraph $G'$ of $G$ that is represented by the boxes $B_1, B_2, \dots, B_k$ has page number one and therefore is outerplanar\footnote{A graph $G=(V,E)$ has page number one if there exists a total order $\leq$ of $V$ such that there are no two edges $(u,v)$ and $(w,z)$ with $u \leq w \leq v \leq z$. It is known that a graph has page number one if and only if it is outerplanar~\cite{BernhartKainen79}.}. This implies that the maximum number of edges of $G'$ is $2k-3$. Suppose that the unit squares in $\Gamma_0$ occupy $R$ rows and $C$ columns, and that row $r_i$ (with $1 \leq i \leq R$) has $n_{r_i}$ vertices while column $c_i$ (with $1 \leq i \leq C$) has $n_{c_i}$ vertices. The maximum number of edges in $G$ is $\sum_{i=1}^R(2n_{r_i}-3)+\sum_{i=1}^C(2n_{c_i}-3)=2n-3R+2n-3C=4n-3(R+C)$. It is easy to see that this number is maximized when $R=C=\sqrt{n}$, i.e., when the squares of $\Gamma_0$ form a $\sqrt{n} \times \sqrt{n}$ grid. In this case the maximum number of edges is $4n-6\sqrt{n}$.


\begin{figure}[t]
\centering
\includegraphics[scale=0.6]{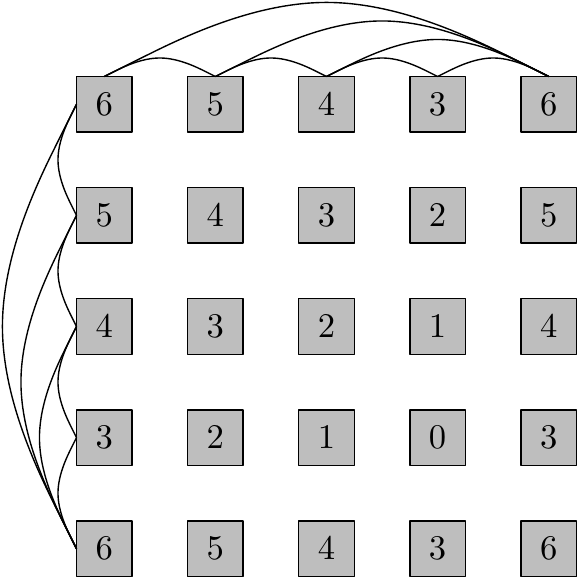}
\caption{A \gbr{} with maximum edge density. The numbers in the square indicate the height of each box. The edges of the subgraph induced by the first row and the first column are shown.}
\label{fig:maxedges}
\end{figure}

For each $n=k^2$ a \gbr{} that achieves the maximum edge density can be created by placing $n$ boxes so that the footprint is a $k \times k$ grid and then choosing the heights of the boxes along each row and column to form a decending sequence between two maxima. Figure~\ref{fig:maxedges} shows a $5\times 5$ grid with appropriate heights indicated. The pattern is easily extended.\qed

\paragraph{Missing part of the proof of Theorem~\ref{th:hardness}.} 
To complete the proof of Theorem~\ref{th:hardness}, we must prove that $G_H$ admits a Hamiltonian path if and only if $F^*$ contains a Hamiltonian path. Recall that $F^*$ denotes a graph with a vertex for each square in $F$ and an edge between two squares if and only if the two squares are horizontally or vertically aligned. Suppose first that $G_H$ has a Hamiltonian path $H$. Each edge $e=(u,v)$ of $H$ corresponds to two edges in $F^*$: one connecting a vertex square $S_i(u)$ ($1 \leq i \leq 4$) to the edge square $S_e$, and one connecting $S_e$ to a vertex square $S_j(v)$ ($1 \leq j \leq 4$). Notice however that the set $E_H$ of such edges does not form a Hamiltonian path of $F^*$ because it does not contain all vertices of $F^*$. In particular, for each vertex $v$ of $G_H$ there are two vertex squares $S_i(v)$ and $S_j(v)$ ($1 \leq i,j \leq 4$) that have no incident edge in $E_H$ (for the end-vertices of $H$, the vertex squares without incident edges in $E_H$ are three); also, the edge squares of all the edges that are not in $H$ have no incident edges  in $E_H$. We say that these edge squares are \emph{orphans}. We now show that it is possible to select additional edges of $F^*$ to create a Hamiltonian path. Each orphan edge square is assigned to one of the end-vertices of the edge corresponding to the square. The assignment is arbitrary, we only take care that at most one orphan edge square is assigned to each end-vertex of $H$. Let $v$ be a vertex of $G_H$ and suppose first that $v$ is an internal vertex of $H$. We have two cases: an orphan edge square is associated with $v$ or not. In both cases then we have two sub-cases: the vertex square of $v$ with an incident edge of $E_H$ are horizontally/vertically aligned or not. Figures~\ref{fi:transformations-a-1}-\ref{fi:transformations-a-4} shows for each sub-case how to select additional edges of $F^*$ so that the four vertex squares of $v$ and, possibly, the orphan edge square assigned to $v$ are traversed by a simple path. 
The case when $v$ is an end-vertex of $H$ can be treated similarly, Figures~\ref{fi:transformations-b-1}-\ref{fi:transformations-b-3} shows the possible cases. It is easy to see that applying the transformations illustrated in Figure~\ref{fi:transformations-ab} to all the vertices $v$ of $G_H$, we obtain a Hamiltonian path of $F^*$. Figure~\ref{fi:transformations-complete} shows a complete example.

\begin{figure}[tbp]
\centering
\begin{subfigure}[b]{.22\linewidth}
\centering
\includegraphics[width=\columnwidth,page=1]{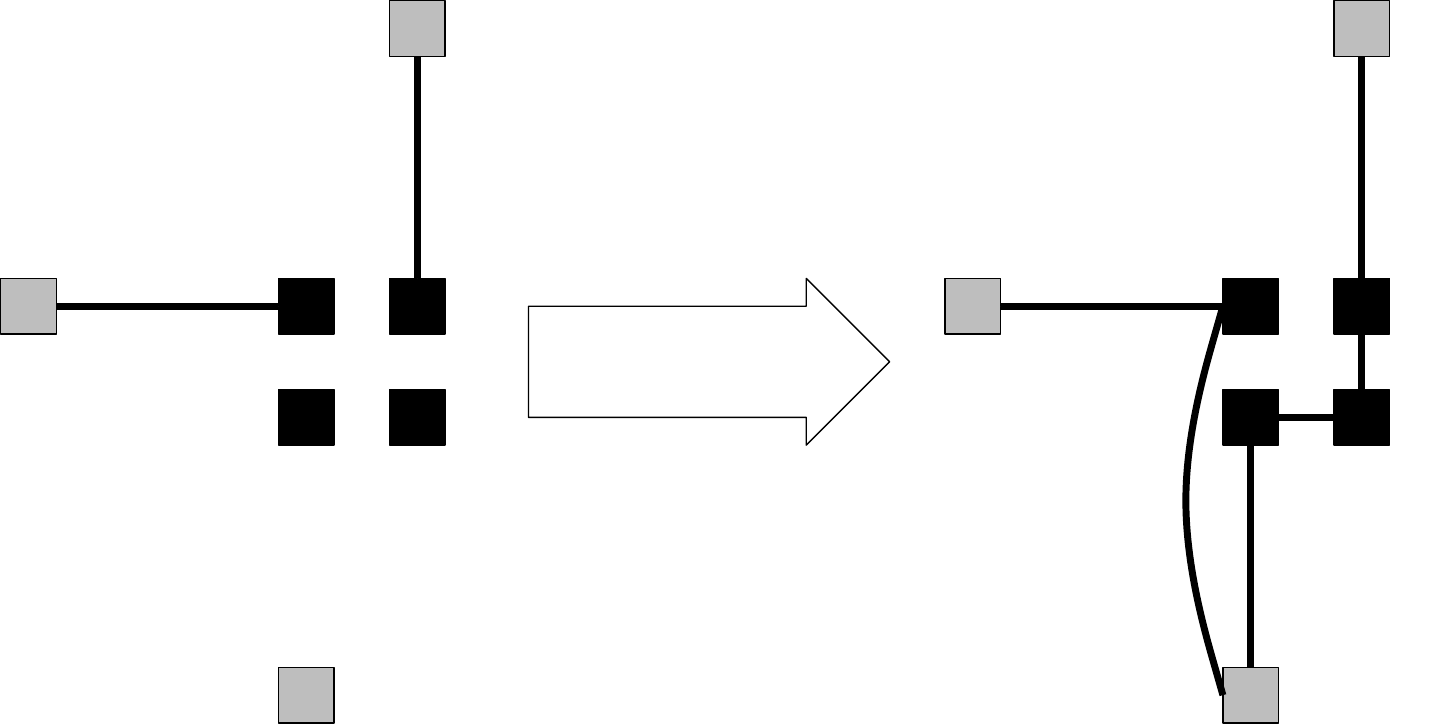}
\caption{}\label{fi:transformations-a-1}
\end{subfigure}%
\hfil
\begin{subfigure}[b]{.22\linewidth}
\centering
\includegraphics[width=\columnwidth,page=2]{transformations-a}
\caption{}\label{fi:transformations-a-2}
\end{subfigure}%
\hfil
\begin{subfigure}[b]{.22\linewidth}
\centering
\includegraphics[width=\columnwidth,page=3]{transformations-a}
\caption{}\label{fi:transformations-a-3}
\end{subfigure}%
\hfil
\begin{subfigure}[b]{.22\linewidth}
\centering
\includegraphics[width=\columnwidth,page=4]{transformations-a}
\caption{}\label{fi:transformations-a-4}
\end{subfigure}%
\\
\begin{subfigure}[b]{.22\linewidth}
\centering
\includegraphics[width=\columnwidth,page=1]{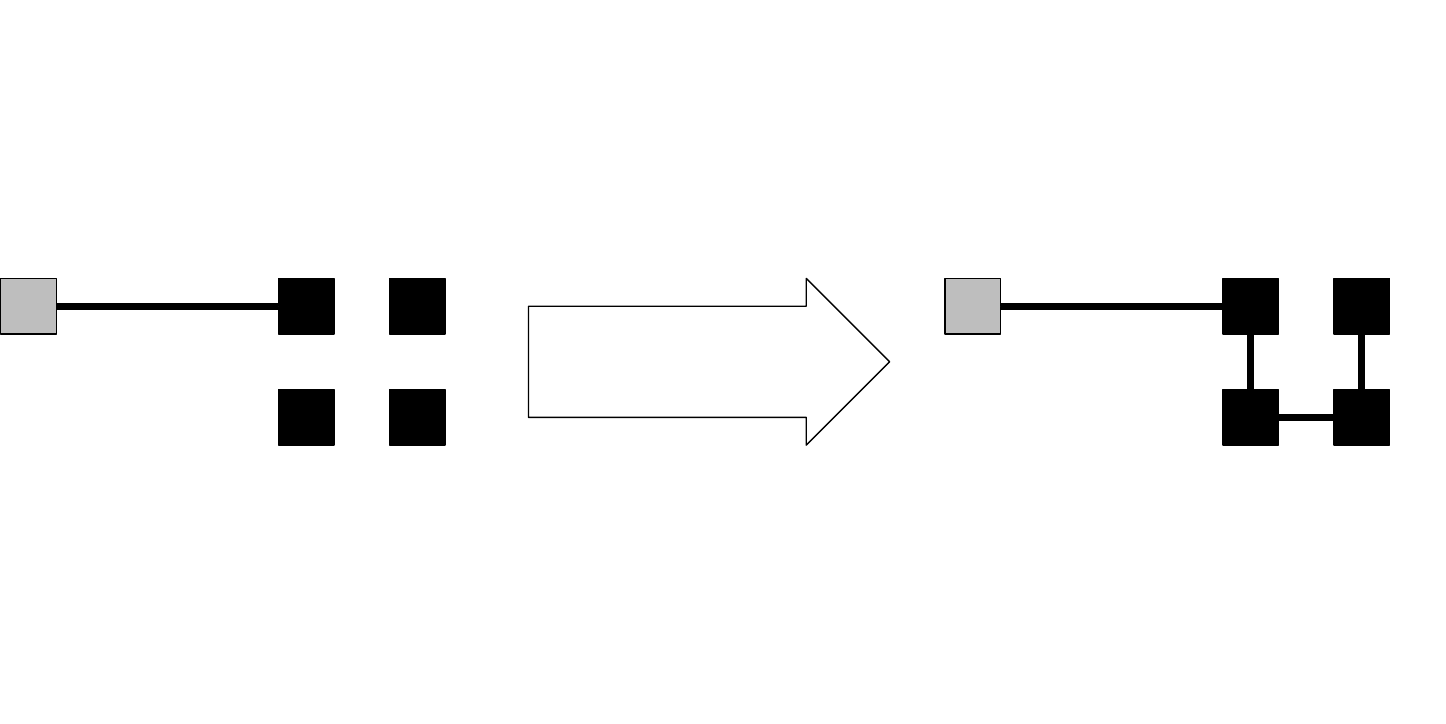}
\caption{}\label{fi:transformations-b-1}
\end{subfigure}%
\hfil
\begin{subfigure}[b]{.22\linewidth}
\centering
\includegraphics[width=\columnwidth,page=2]{transformations-b}
\caption{}\label{fi:transformations-b-2}
\end{subfigure}%
\hfil
\begin{subfigure}[b]{.22\linewidth}
\centering
\includegraphics[width=\columnwidth,page=3]{transformations-b}
\caption{}\label{fi:transformations-b-3}
\end{subfigure}%
\caption{Illustration for the proof of Theorem~\ref{th:hardness}}.\label{fi:transformations-ab}
\end{figure}

\begin{figure}[t]
\centering
\begin{subfigure}[b]{.3\linewidth}
\centering
\includegraphics[width=\columnwidth,page=1]{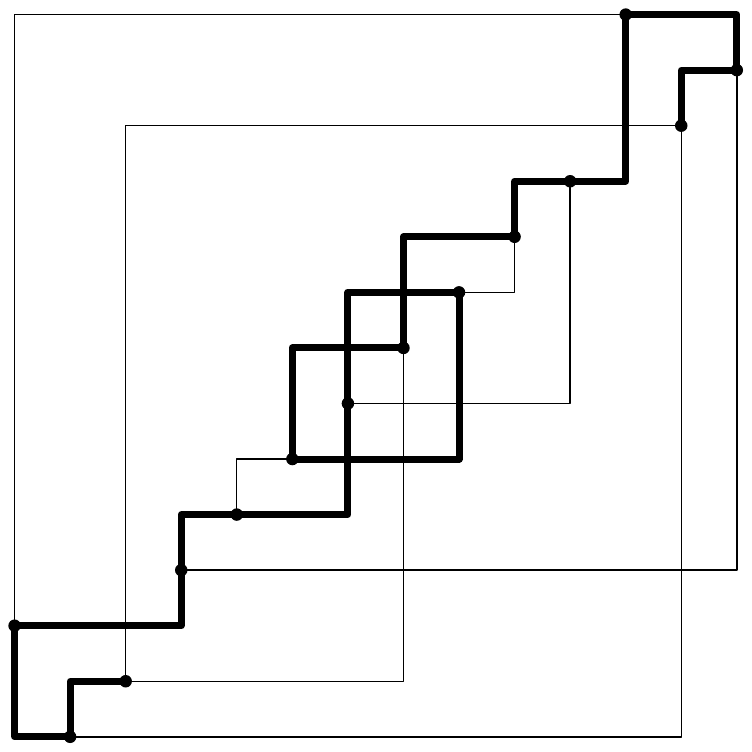}
\caption{}\label{fi:transformations-complete-1}
\end{subfigure}%
\hfil
\begin{subfigure}[b]{.3\linewidth}
\centering
\includegraphics[width=\columnwidth,page=2]{transformations-complete}
\caption{}\label{fi:transformations-complete-2}
\end{subfigure}
\hfil
\begin{subfigure}[b]{.3\linewidth}
\centering
\includegraphics[width=\columnwidth,page=3]{transformations-complete}
\caption{}\label{fi:transformations-complete-3}
\end{subfigure}
\caption{(a) A Hamiltonian path $H$ of the cubic graph $G_H$. (b) Constructing a Hamiltonian path of $F^*$. The bold edges are the edges of $F^*$ that correspond to the edges of $H$. (c) A Hamiltonian path of $F^*$.}\label{fi:transformations-complete}
\end{figure}

\begin{figure}[t]
\centering
\begin{subfigure}[b]{.3\linewidth}
\centering
\includegraphics[width=\columnwidth,page=1]{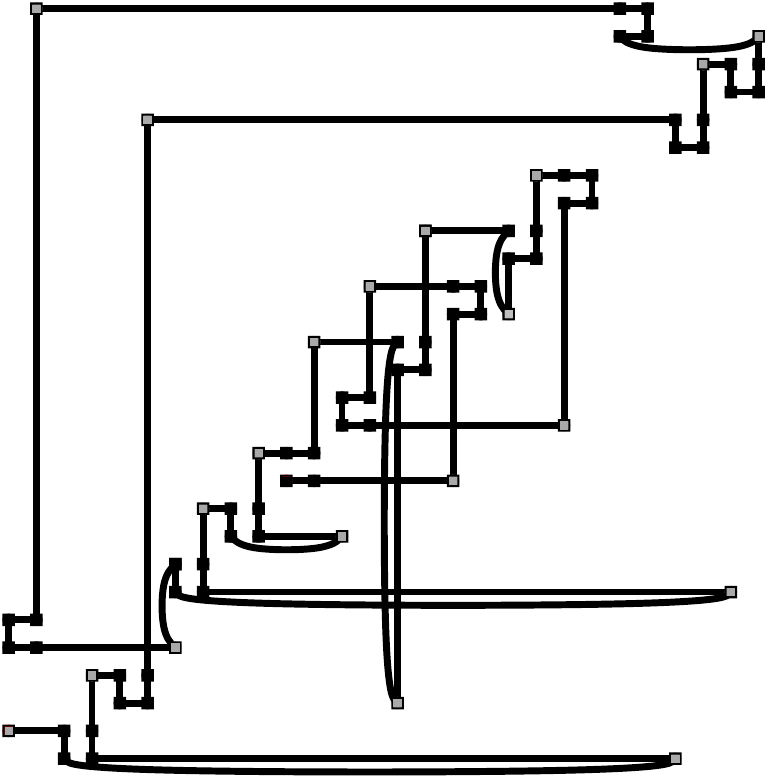}
\caption{}\label{fi:hamiltonian-1}
\end{subfigure}%
\hfil
\begin{subfigure}[b]{.3\linewidth}
\centering
\includegraphics[width=\columnwidth,page=2]{hamiltonian}
\caption{}\label{fi:hamiltonian-2}
\end{subfigure}
\hfil
\begin{subfigure}[b]{.3\linewidth}
\centering
\includegraphics[width=\columnwidth,page=3]{hamiltonian}
\caption{}\label{fi:hamiltonian-3}
\end{subfigure}
\caption{(a) A Hamiltonian path $H^*$ of $F^*$. (b) $H^*$ simplified. (c) The candidate edges $E_H$ of $G_H$. The dashed edge has to be removed to create a Hamiltonian path $H$ of $G_H$.}\label{fi:hamiltonian}
\end{figure}

Suppose now that $F^*$ has a Hamiltonian path $H^*$. We show how to construct a Hamiltonian path $H$ of $G_H$. We first make a simplification of $H^*$. Let $S_e$ be an edge square of $F^*$ and suppose that it is adjacent in $H^*$ to two vertex squares $S_i(v)$ and $S_j(v)$ of the same vertex $v$. In this case $S_e$ and its adjacent edges are removed from $H^*$ and the edge connecting $S_i(v)$ and $S_j(v)$ is added to $H^*$. Analogously, if $S_e$ is an end-vertex of $H^*$, we remove it (and its adjacent edge) from $H^*$. After this simplification, $H^*$ is a simple path that traverses all vertex squares and a subset of the edge squares. Each edge square $S_e$ that is still in $H^*$ is adjacent to two vertex squares $S_i(u)$ and $S_j(v)$ of two different vertices $u$ and $v$. Edge $(u,v)$ of $G_H$ is selected as a candidate edge for $H$. We now show that the set $E_H$ of candidate edges forms a Hamiltonian path $H$ of $G_H$, possibly after the removal of one or two edges. Let $v$ be a vertex of $G_H$ such that no end-vertex of $H^*$ is a vertex square of $v$.  We claim that the four vertex squares of $v$ appear consecutively in $H^*$. Since $G_H$ is a cubic graph, there can be at most three edges of $H^*$ with an end-vertex in $S(v)$ and the other end-vertex outside $S(v)$. However, since any other edge of $H^*$ incident to a  square of $S(v)$ must have the other end-vertex also in $S(v)$, if one or three such edges existed, then a square of $S(v)$ would be an end-vertex of $H^*$, but we are assuming that this is not the case. Thus, the four squares of $S(v)$ are connected to two squares not in $S(v)$ and therefore they must be consecutive in $H^*$. This implies that each vertex $v$ of $G_H$, except at most two, has at most two incident candidate edges. The exceptions are the (at most) two vertices whose vertex squares include the end-vertices of $H^*$. This gives rise to two edges of $E_H$ that can be removed to obtain a Hamiltonian path $H$. Figure~\ref{fi:hamiltonian} shows an example.\qed

\end{document}